\newtheorem{thm}{Theorem}[section]
 \newtheorem{dfn}[thm]{Definition}
 \newtheorem{lem}[thm]{Lemma}
 \newtheorem{rem}[thm]{Remark}
\newtheorem{cor}[thm]{Corollary}
\newtheorem{prop}[thm]{Proposition}
\numberwithin{equation}{section}
\begin{document}

\title{The Value of Insider Information for Super--Replication
with Quadratic Transaction Costs }
\thanks{$^*$Corresponding author}

   \author{Yan Dolinsky$^{*}$ \address{
 Department of Statistics, Hebrew University of Jerusalem, Israel. \\
 e.mail: yan.dolinsky@mail.huji.ac.il}}${}$\\
\author{Jonathan Zouari \address{
 Department of Statistics, Hebrew University of Jerusalem, Israel.\\
 e.mail: jonathan.zouari@mail.huji.ac.il}
  ${}$\\
  ${}$\\
Hebrew University }

\date{\today}

\begin{abstract}
We study super--replication of European contingent claims in an illiquid
  market with insider information. Illiquidity is captured by quadratic transaction costs
  and insider information is modeled by an investor who can peek into the future. Our main result
  describes
  the scaling limit of the super--replication prices when the
  number of trading periods increases to infinity. Moreover, the scaling limit gives us the asymptotic value of being an insider.
\end{abstract}

\subjclass[2010]{91G10, 91G20, 60F05}
 \keywords{Insider trading, Scaling limits, Super--replication, Quadratic costs, Volatility uncertainty}
 \maketitle \markboth{Y.Dolinsky and J.Zouari}{The Value of Insider Information}
\renewcommand{\theequation}{\arabic{section}.\arabic{equation}}
\pagenumbering{arabic}

\section{Introduction}
\label{intro}
This paper deals with super--replication of European options with insider information
and quadratic transaction costs.
 We model insider information
by assuming that the investor is allowed to peek $\epsilon$
into the future.
This can be viewed as a progressive enlargement of filtrations.
Although the impact of insider information on utility maximization was largely studied (see, for instance
\cite{ABS,ADI,APS,I,PK}),
to the best of our knowledge
the impact of insider information on super--replication was not studied at all.

Our modeling of illiquidity via quadratic transaction costs
follows the approach of \c{C}etin, Jarrow and
Protter (see \cite{CJP}) who model illiquidity via supply curves.
Supply curves correspond to the case where the size of the
trade has an immediate but temporary effect on the price of the asset.
A linear impact results in quadratic transaction costs. For more details, see the survey paper \cite{GRS}.

In setups with quadratic transaction costs, continuous--time
models are known to produce no liquidity effect at all, see \cite{BB} and \cite{CJP}.
The intuition behind this result is that due to density arguments
the investor can restrict himself to continuous and bounded variation trading strategies,
and for this type of strategies the quadratic transaction costs are zero.
Hence, in the continuous time models, a prophet investor can super--replicate a claim with arbitrary small amount. Namely,
the super--hedging price is minus infinity.

In order to avoid the above "unreasonable" setup, we
consider the
discrete--time version of the
quadratic transaction costs model.
Our sole assumption on the price dynamics is that
the absolute value of the
returns is bounded both from below and above. This can be viewed as the
discrete--time Bachelier version of the widely studied uncertain volatility
models, (see, e.g., \cite{DM1,NN,NS,PRT,STZ}).
We study the scaling limit of the
corresponding super--replication prices when the number of trading periods goes to infinity and the time step goes to zero (continuous--time limit). Moreover, we
assume that the future time interval that the investor can predict tends to zero linearly in the time step.

Our main result (Theorem \ref{thm2.2}) is the characterization of the scaling limit
as a stochastic volatility control problem
defined on the Wiener space.
The main result provides a non-trivial conclusion (Corollary \ref{cor2.1})
which gives the asymptotic value of insider information.

We divide the proof of Theorem~\ref{thm2.2}
into two main steps, namely proving the lower bound (Proposition \ref{prop.campi1}) and proving the upper bound
(Proposition \ref{prop.campi2}).

For the lower bound, we identify the correct
dual representation for the super--replication prices,
and study the limit of the dual terms and its dependence on the insider information.

For the second step, namely, the upper bound, our
main idea is to identify, in the presence of insider information,
a "good" subclass of trading strategies.
Good subclass is understood in the sense that from one hand it does not change the
asymptotic behavior of the super--replication prices and on the other hand
gives a comfortable dual representation, which allows to obtain tightness of the corresponding
consistent price systems.
This idea can be seen as the extension of
the technique recently developed in
\cite{BD},
for the setup of insider trading
and model uncertainty.

Our main result
is not only a generalization of Theorem 2.7 in \cite{BDG} for the setup of insider information, but also an improvement of this theorem.
In \cite{BDG},
in order to get tightness of the consistent price systems (needed for the proof of the upper bound), one had to impose
linear growth constraints on transaction costs and only allowed
quadratic costs in an even smaller region around zero. In this paper we overcome this technical assumption by identifying
a "good" subclass of constrained trading strategies. So Theorem \ref{thm2.2} contains a contribution for the "usual" (i.e.\ no insider information) case as well.

A natural extension of the current work is to consider the more realistic setup
where the insider trading strategy has a permanent impact on the stock prices. Intuitively, if we consider an investor who can pick only small time $\epsilon$ into the future,
such price impact will result in a prohibitively costly super--replication prices. One
way to overcome this is to follow \cite{BD} and consider scaling limits where the permanent price impact tends to zero. However, as we showed in
\cite{BD}
for the setup without
insider information, the asymptotic behaviour of such models is the same as models
with quadratic transaction costs. Hence, a more interesting and challenging direction will be to combine the present methodology of asymptotic analysis together
with the Kyle--Back equilibrium modeling of insider trading (see \cite{B,BP,CCD,DA}).
We
leave this type of extension for future research.

The rest of the paper is organized as follows. In the next section we introduce
the setup and formulate the main results. In Section~\ref{sec:3} we prove the lower bound.
In Section~\ref{sec:4} we prove the upper bound.

\section{Preliminaries and Main Results }\label{sec:2}
Let $0 <\underline{\sigma} \leq \overline{\sigma} <\infty$ be two given constants which represent the volatility bounds.
Consider the sample space
$$\Omega:=\left\{\omega=(x_1,x_2,....)\in\mathbb R^{\mathbb N} : \ \underline{\sigma} \leq  |x_i|\leq \overline{\sigma}, \ \forall i\in\mathbb N\right\}$$
with the Borel $\sigma$--algebra and
the canonical process
$$X_k(\omega):=x_k \ \ \mbox{for} \ \ \omega=(x_1,x_2,....).$$

For any $n\in\mathbb N$ consider a financial market with a riskless savings account and a risky stock. The savings account
will be used as a numeraire and thus we normalize its value to $\equiv 1$. The stock price evolution
$S^{n}=\{S^{n}_k\}_{k=0}^n$  is given by
a Bachelier's type of model
\begin{equation*}
S^{n}_k:=s+\frac{1}{\sqrt n}\sum_{i=1}^k X_i, \ \ k=0,1,...,n
\end{equation*}
where $s\in\mathbb R$ is a constant.

We do not impose a particular probabilistic model and our
sole assumption on the stock price dynamics is that the absolute values of the increments
lie in the interval $[\underline{\sigma}/\sqrt n,\overline{\sigma}/\sqrt n]$.

We fix $N\in\mathbb N$ and consider an investor who can peek $N$ trading periods into the future.
Namely, the filtration of the investor is given by
$\mathcal G_k:=\sigma\{X_1,...,X_{k+N}\}$, $k\in\mathbb Z_{+}$.

Next, let $\Lambda>0$ be a
constant and consider a quadratic cost function $\beta\rightarrow \Lambda \beta^2$ where $\beta$ is
the number of traded stocks. Thus, for the $n$--step model a trading strategy is a map
$\gamma:\{0,1,...,n-1\}\times\Omega\rightarrow \mathbb R$ such that
$\gamma_k:=\gamma(k,\omega)$ is $\mathcal G_k$ measurable function
which is specifying the number of shares held at any period
$k=0,1...,n-1$.

The evolution of the mark--to--market value
$\{Y^{\gamma}_{k}\}_{k=0}^n$ resulting from a trading strategy $\gamma$ is given by
\begin{equation*}
Y^{\gamma}_{k}:=\sum_{i=0}^{k-1} \left(\gamma_i(S^{n}_{i+1}-S^{n}_i)-\Lambda (\gamma_i-\gamma_{i-1})^2\right), \ \ k=0,1,...,n
\end{equation*}
where we set $\gamma_{-1}\equiv 0$.
Hence, we start with zero stocks in our portfolio and
trading to the new position $\gamma_i$ to be held after time $i$ incurs the transaction costs
$\Lambda (\gamma_i-\gamma_{i-1})^2$ which is the only friction in our model. The value $Y^{\gamma}_{k}$
represents the portfolio’s mark--to--market value at time $k$.
Note that, focussing on the mark-to-market value rather than the liquidation
value, we disregard in particular the costs of unwinding any non--zero position for
simplicity.

\begin{rem}
In order to make the analysis less technical we assume that $\underline{\sigma}>0$,
this can be viewed as
assuming that the market is sufficiently volatile.
In this case the stock fluctuation (which might be positive or negative) satisfies
$$|S^n_{k+N}-S^n_k|\geq\frac{\underline{\sigma}N}{n}>0, \ \ k=0,1,...,n-N.$$
Hence, the investor can use his insider information
in order to create an arbitrage opportunities.
However, due to quadratic transaction costs the volume of the trade needs to be bounded.
Thus, although the investor has an arbitrage opportunity, the super--replication
problem still makes sense and the super--replication price is larger than
$-\infty$.
\end{rem}

The main result of this paper is a scaling limit theorem
for the
super--replication costs, letting the number
of trading periods $n$ over the time span $[0,1]$ tend to infinity
while re--scaling the time between trades as $1/n$.

Let $D[0,1]$ be the space
of all RCLL (right continuous with left limits) functions
$p=(p_t)_{t\in [0,1]}:[0,1]\rightarrow\mathbb R$.
Let
  $H:D[0,1] \to \mathbb R$ be a
  continuous map with respect to the Skorohod metric
 \begin{equation*}
  d(p,q) := \inf_{\chi} \left\{\sup_{0 \leq t \leq 1}|t-\chi(t)|+\sup_{0
   \leq t \leq 1} |p_t-q_{\chi(t)}|\right\}, \forall p,q \in D[0,1]
 \end{equation*}
 where the infimum is over all strictly increasing continuous time
 changes $\chi:[0,1] \to [0,1]$ with $\chi(0)=0$ and $\chi(1)=1$.

For any $n\in\mathbb N$ consider a European option in the $n$--step model with the payoff
$Z_n:=H\left(\{S^{n}_{[nt]}\}_{t=0}^1\right)$ where
$[\cdot]$ denotes the integer part of $\cdot$. The process
$\{S^{n}_{[nt]}\}_{t=0}^1$ is viewed as a measurable map from $\Omega$ to $D[0,1]$.
The
super--replication price $\pi_n(Z_n)$ is
then defined as
\begin{equation*}
\pi_n(Z_n):=\inf\left\{y \in \mathbb{R}: \ \exists\gamma
  \text{ with }
y+Y^\gamma_n(\omega)\geq Z_n(\omega) \ \forall \omega\in \Omega\right\}.
\end{equation*}
We emphasize that we require the construction of a robust
super--replication strategy which leads to a terminal value
that dominates the payoff in any conceivable
scenario $\omega \in \Omega$.

Before we arrive at the main results we need some preparations.
Let $\Gamma$ be the set of all bounded (uniformly in time and space), nonnegative progressively measurable processes $\nu=\{\nu_t\}_{t=0}^1$
on the Wiener space $(\Omega^W,\mathcal F^W,(\mathcal F^W_t)_{0 \leq t \leq 1},\mathbb P^W)$
with Wiener process $W$ and its generated filtration $(\mathcal F^W_t)_{0 \leq t \leq 1}$.
For any $\nu\in\Gamma$ we define the process
 $\{S^{\nu}_t\}_{t=0}^1$
\begin{equation}\label{2.last}
S^{\nu}_t :=s + \int_0^t \nu_u \,dW_u, \ \  0 \leq t \leq 1.
\end{equation}
In addition,
introduce the (deterministic) function $G:\mathbb R_{+}\rightarrow\mathbb R_{+}$
\begin{equation*}
    G(z):=
   \begin{cases}
\left(\underline{\sigma}-\frac{z}{\underline{\sigma}}\right)^2, & 0\leq z<\underline{\sigma}^2,\\
0, & \underline{\sigma}^2 \leq z \leq \overline{\sigma}^2,\\
\left(\frac{z}{\overline{\sigma}}-\overline{\sigma}\right)^2, &
z>\overline{\sigma}^2.
   \end{cases}
  \end{equation*}

Now, we are ready to formulate our main results.
We start with the formulation of the lower bound which will be proved in Section \ref{sec:3}.
\begin{prop}\label{prop.campi1}
Let $H:D[0,1] \to \mathbb R$ be a
  continuous function with respect to the Skorohod metric. Moreover, assume that there exists
  constants $C,\mu>0$ such that
  \begin{equation}\label{campi1}
 | H(p)|\leq C\left(1+\sup_{0\leq t\leq T}|p_t|^\mu\right), \ \ \forall  p\in D[0,1].
 \end{equation}
  Then,
 \begin{equation}\label{2.-5}
 \lim\inf_{n\rightarrow\infty} \pi_n(Z_n)\geq \sup_{\nu\in\Gamma}\mathbb E_{\mathbb P^W}\left( H(S^{\nu})-\frac{N}{4\Lambda}|S^{\nu}_1-s|^2-\frac{1}{16\Lambda}\int_{0}^1 G(\nu^2_t) dt\right).
 \end{equation}
\end{prop}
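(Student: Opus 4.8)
The plan is to establish the lower bound by exhibiting, for each $\nu\in\Gamma$, an explicit sequence of (approximate) dual objects—i.e.\ approximate martingale measures together with a penalty term coming from the quadratic costs and the insider's peek—whose associated lower expectation converges to the right-hand side of \eqref{2.-5}. Concretely, I would first write down the correct dual representation of $\pi_n(Z_n)$ for the finite $n$-step market with the $\mathcal G$-filtration. Since the cost function is quadratic, the super-replication price admits a representation as a supremum over (signed) densities: roughly, $\pi_n(Z_n)=\sup \big(\mathbb E_{\mathbb Q}[Z_n] - \text{penalty}(\mathbb Q)\big)$, where the penalty involves $\frac1{4\Lambda}$ times a sum of squared conditional expectations of price increments—these are the terms an insider can exploit—plus a term controlling how far $\mathbb Q$ is from being a martingale measure, scaled by $\Lambda$ and producing (in the limit) the $G(\nu^2)$ penalty. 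This is the step where the insider filtration $\mathcal G_k=\sigma\{X_1,\dots,X_{k+N}\}$ enters: conditioning on $\mathcal G_k$ lets one see $N$ future increments, so the ``arbitrage'' profit extractable per period is of order $\big(\mathbb E_{\mathbb Q}[S^n_{k+N}-S^n_k\mid\cdot]\big)^2/(4\Lambda)$, and summing telescopes to the $\frac{N}{4\Lambda}|S^\nu_1-s|^2$ term in the limit.

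Next, given a fixed $\nu\in\Gamma$ and the diffusion $S^\nu$ from \eqref{2.last}, I would discretize: build a sequence of probability measures $\mathbb Q_n$ on $\Omega$ under which the rescaled price process $\{S^n_{[nt]}\}$ converges weakly (in the Skorohod topology) to $S^\nu$. A natural construction is to let the increments $X_i/\sqrt n$ have (under $\mathbb Q_n$) conditional variance matching $\nu^2$ evaluated along a discretized time grid, truncated/projected so that $|X_i|\in[\underline\sigma,\overline\sigma]$; the cost of this projection, i.e.\ the discrepancy between the desired conditional second moment $\nu_t^2$ and what the constraint $[\underline\sigma^2,\overline\sigma^2]$ permits, is precisely what $G(\nu_t^2)$ measures, which is why $G$ has exactly that piecewise form. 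Under $\mathbb Q_n$ one also wants to tilt the drift slightly so that the conditional mean of $S^n_{k+N}-S^n_k$ tracks the insider term; a convenient choice makes $S^\nu$ a martingale in the limit so that the drift penalty contributes only through the $N$-step lookahead. Then one feeds $\mathbb Q_n$ into the finite-$n$ dual representation, uses the continuity and polynomial growth \eqref{campi1} of $H$ together with uniform integrability (from the boundedness of increments and $\nu$) to pass $\mathbb E_{\mathbb Q_n}[H(\{S^n_{[nt]}\})]\to\mathbb E_{\mathbb P^W}[H(S^\nu)]$, and checks that the discrete penalty terms converge to $\frac{N}{4\Lambda}|S^\nu_1-s|^2+\frac1{16\Lambda}\int_0^1 G(\nu_t^2)\,dt$. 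Taking $\liminf_n$ and then the supremum over $\nu\in\Gamma$ yields \eqref{2.-5}.

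The main obstacle I anticipate is two-fold. First, pinning down the \emph{exact} constants in the dual penalty—the $\frac{N}{4\Lambda}$ and especially the $\frac1{16\Lambda}$ with the specific function $G$—requires a careful optimization: for a given target volatility path one must solve, period by period, the quadratic trade-off between the profit from a mispriced increment and the quadratic cost $\Lambda(\gamma_i-\gamma_{i-1})^2$, and the factor $16$ (rather than $4$) reflects that the relevant per-period mismatch is a \emph{second-order} quantity (a variance discrepancy of order $1/n$ controlling a cost that also scales like $1/n$), so one has to be scrupulous about which increments are being hedged against and how the squared-increment penalty aggregates. Second, justifying that the candidate $\mathbb Q_n$ is genuinely admissible—that after projecting the conditional law onto $\{|x|\in[\underline\sigma,\overline\sigma]\}$ one still has a bona fide probability measure on $\Omega$ under which the weak-convergence and penalty-convergence computations go through—demands a clean construction, likely via conditional laws on $[\underline\sigma,\overline\sigma]\cup[-\overline\sigma,-\underline\sigma]$ chosen to hit prescribed first and second conditional moments; some approximation is needed since not every $(\text{mean},\text{variance})$ pair is achievable, and the residual is exactly the $G$-penalty. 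Everything else—weak convergence via a martingale CLT / Skorohod-topology tightness argument, uniform integrability from bounded increments, and continuity of $H$—is routine.
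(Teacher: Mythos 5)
Your overall architecture matches the paper's: first a dual lower bound of the form $\pi_n(Z_n)\ge\sup_{\mathbb P}\mathbb E_{\mathbb P}\bigl(Z_n-\frac1{4\Lambda}\sum_{i}(\mathbb E_{\mathbb P}(S^n_n\mid\mathcal G_i)-S^n_i)^2\bigr)$, then a construction of measures $\mathbb Q_n$ under which $S^n$ converges weakly to $S^{\nu}$ and the penalty converges to the two limit terms, then uniform integrability and a supremum over $\nu$. Note, however, that only this easy one-sided inequality is needed and it holds for \emph{every} Borel probability measure: it follows from summation by parts and $\beta\alpha-\Lambda\beta^2\le\alpha^2/(4\Lambda)$. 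You do not need a full dual representation, and in particular not the second ``martingale-distance'' penalty you posit; in the paper the $G$-term and the $N$-term both come out of the single conditional-mean penalty above.

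The genuine gap is in your construction of $\mathbb Q_n$ when $\nu_t$ exits $[\underline\sigma,\overline\sigma]$. You propose to give the increments conditional variance $\nu^2$ ``truncated/projected so that $|X_i|\in[\underline\sigma,\overline\sigma]$'' and to read off $G(\nu_t^2)$ as the projection residual. But if you simply truncate the per-step variance, the weak limit of $S^n$ is $S^{\underline\sigma\vee\nu\wedge\overline\sigma}$, not $S^{\nu}$, so you do not recover $\mathbb E[H(S^{\nu})]$; moreover, since the dual penalty is expressed through conditional \emph{means}, a pure variance discrepancy never enters it and no $G$-term appears---as sketched you would only obtain the bound with $\nu$ restricted to the band, where $G\equiv 0$. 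The missing idea (the paper's Lemma \ref{lem3.1}) is to reach volatility $\nu$ outside the band by introducing serial correlation via a tilt of the sign probabilities: one takes $|X_k|=\underline\sigma\vee f\wedge\overline\sigma=\sigma^n_k$ and chooses the signs so that the shifted process $M^n_k=S^n_k+\kappa^n_kX_k/\sqrt n$, with $\kappa^n_k=\frac12\bigl(f^2/|\sigma^n_k|^2-1\bigr)$, is a $\mathbb Q_n$-martingale whose per-step conditional variance is $f^2/n=\nu^2/n$; then $S^n=M^n+O(n^{-1/2})\Rightarrow S^{\nu}$, while the single penalty $\frac1{4\Lambda}\sum_k\bigl(\mathbb E_{\mathbb Q_n}(S^n_n\mid\mathcal G_k)-S^n_k\bigr)^2=\frac1{4\Lambda}\sum_k(M^n_{k+N}-S^n_k)^2+o(1)$ decomposes into $\frac1{4\Lambda n}\sum_k|\kappa^n_k\sigma^n_k|^2\to\frac1{16\Lambda}\int_0^1G(\nu^2_t)\,dt$ plus quadratic-variation-type terms converging (via the It\^o isometry, not a telescoping) to $\frac N{4\Lambda}\mathbb E_{\mathbb P^W}|S^{\nu}_1-s|^2$. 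A per-step drift cannot substitute for this mechanism: a conditional mean of order $n^{-1/2}$ accumulates to a divergent drift, while anything smaller cannot alter the effective volatility. Finally, the paper first reduces $\Gamma$ to a regular subclass $\Gamma_0$ (Lipschitz $f$, bounded away from zero, $\equiv\overline\sigma$ near $t=1$, the last point making $\mathbb E_{\mathbb Q_n}(S^n_n\mid\cdot)=M^n$ exact); your sketch would need an analogous regularization before the construction goes through.
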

The following simple corollary will be used in the proof of the upper bound.
\begin{cor}\label{cor3.1}
Let $H:D[0,1] \to \mathbb R_{+}$ be a continuous function which satisfies (\ref{campi1}). Then,
$$\lim\inf_{n\rightarrow\infty} \pi_n(Z_n)>-\infty.$$
\end{cor}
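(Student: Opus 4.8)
The plan is to deduce the corollary directly from Proposition \ref{prop.campi1} by evaluating the right-hand side of (\ref{2.-5}) at one conveniently chosen volatility process. Since $H$ is continuous and satisfies (\ref{campi1}), Proposition \ref{prop.campi1} applies and yields
\begin{equation*}
\liminf_{n\rightarrow\infty}\pi_n(Z_n)\ \geq\ \sup_{\nu\in\Gamma}\mathbb E_{\mathbb P^W}\left(H(S^{\nu})-\frac{N}{4\Lambda}|S^{\nu}_1-s|^2-\frac{1}{16\Lambda}\int_0^1 G(\nu^2_t)\,dt\right).
\end{equation*}
In particular, the left-hand side is bounded below by the value of this functional for \emph{any} fixed admissible $\nu$.

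First I would take the constant process $\nu_t\equiv\underline{\sigma}$. It is deterministic, nonnegative and bounded, hence $\nu\in\Gamma$. For this choice $S^{\nu}_t=s+\underline{\sigma}W_t$, so $\mathbb E_{\mathbb P^W}|S^{\nu}_1-s|^2=\underline{\sigma}^2$, and by the definition of $G$ we have $G(\underline{\sigma}^2)=0$, so the integral term vanishes identically. Finally, since $H$ takes values in $\mathbb R_{+}$, we have $H(S^{\nu})\geq 0$ pointwise and therefore $\mathbb E_{\mathbb P^W}H(S^{\nu})\geq 0$. Combining these facts,
\begin{equation*}
\liminf_{n\rightarrow\infty}\pi_n(Z_n)\ \geq\ \mathbb E_{\mathbb P^W}H(S^{\nu})-\frac{N}{4\Lambda}\underline{\sigma}^2\ \geq\ -\frac{N\underline{\sigma}^2}{4\Lambda}\ >\ -\infty,
\end{equation*}
which is the assertion.

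There is no genuine obstacle here: all the analytic content sits in Proposition \ref{prop.campi1}, and the corollary is merely a matter of plugging in a single bounded $\nu$. The only point worth flagging is that nonnegativity of $H$ is used precisely to discard the term $\mathbb E_{\mathbb P^W}H(S^{\nu})$ from below without worrying about its finiteness; any process $\nu\in\Gamma$ taking values in $[\underline{\sigma},\overline{\sigma}]$ (so that the $G$-term drops out) would work equally well, with the constant choice $\underline{\sigma}$ being the most economical.
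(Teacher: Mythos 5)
Your proof is correct and follows essentially the same route as the paper: apply Proposition \ref{prop.campi1} and plug in one fixed constant volatility process (the paper takes $\nu\equiv 0$, giving the bound $H(\bar s)-\frac{\underline{\sigma}^2}{16\Lambda}$, while you take $\nu\equiv\underline{\sigma}$, giving $-\frac{N\underline{\sigma}^2}{4\Lambda}$); either choice suffices.
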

\begin{proof}
Follows from choosing $\nu\equiv 0$ in the right hand side of (\ref{2.-5}).
\end{proof}

Next, we formulate the upper bound which will be proved in Section \ref{sec:4}.
\begin{prop}\label{prop.campi2}
Let $H:D[0,1] \to \mathbb R_{+}$ be nonnegative and Lipschitz
  continuous with respect to the Skorohod metric. Then,
\begin{equation*}
\lim\sup_{n\rightarrow\infty} \pi_n(Z_n)\leq\sup_{\nu\in\Gamma}\mathbb E_{\mathbb P^W}\left(H(S^{\nu})-\frac{N}{4\Lambda}|S^{\nu}_1-s|^2-\frac{1}{16\Lambda}\int_{0}^1 G(\nu^2_t) dt\right).
\end{equation*}
\end{prop}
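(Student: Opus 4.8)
The plan is to construct, for each large $n$, a near-optimal super--replication strategy whose asymptotic cost matches the right-hand side. The natural route, following the duality-plus-tightness scheme sketched in the introduction, is to pass to a dual formulation of $\pi_n(Z_n)$ and bound its $\limsup$ from above. First I would record the finite-horizon duality: since $\Omega$ is (effectively) compact, the quadratic-cost market is a convex problem and $\pi_n(Z_n)$ admits a representation as a supremum over a set of ``consistent price systems'' (martingale-type measures $\mathbb{Q}$ on $\Omega$ together with the dual variables conjugate to the quadratic penalty) of a linear functional $\mathbb{E}_{\mathbb{Q}}[Z_n] - (\text{penalty})$; the $\mathcal{G}_k$-adaptedness of strategies means the dual measures must be martingales in the enlarged filtration, which is exactly where the $\tfrac{N}{4\Lambda}|S_1^\nu - s|^2$ term will come from in the limit. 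The key preparatory move, as the authors flag, is to first replace general strategies by a ``good'' constrained subclass — strategies whose increments $\gamma_i - \gamma_{i-1}$ are bounded by some deterministic $O(1)$ (or slowly growing) bound — and to argue via Corollary \ref{cor3.1} and an explicit truncation estimate that this restriction does not increase the asymptotic super--replication price. This is what buys integrability/tightness for the dual objects and lets one avoid the extra linear-growth hypotheses of \cite{BDG}.

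The core of the argument is then a compactness-and-identification step. Take near-optimal $\gamma^{(n)}$ in the constrained class, form the associated discretized dual elements, and embed everything on a common probability space via Skorohod representation; one wants to show the discrete stock paths $\{S^n_{[nt]}\}$, the cumulative penalty processes, and the dual (martingale) measures are tight in the appropriate topologies (Skorohod on $D[0,1]$ for paths, weak topology for measures), using the constraints from the good subclass to get the needed uniform moment bounds. Along any convergent subsequence the limiting price process must be a continuous martingale of the form $S^\nu_t = s + \int_0^t \nu_u\,dW_u$ with $\underline\sigma \le |\nu| \le \overline\sigma$ in the appropriate averaged sense — but crucially the \emph{effective} volatility seen by the limiting dual object may be pushed outside $[\underline\sigma,\overline\sigma]$, and the cost of doing so is precisely the $G(\nu^2_t)$ penalty: the elementary inequality relating $\sum (\text{increment})^2$ and the quadratic transaction cost, after optimizing over the increment size, produces $\tfrac{1}{16\Lambda}\int_0^1 G(\nu^2_t)\,dt$ in the limit, and the $N$-period look-ahead contributes the boundary term $\tfrac{N}{4\Lambda}|S_1^\nu - s|^2$ (this is the one genuinely ``insider'' piece, and identifying its scaling constant $N/(4\Lambda)$ is the delicate bookkeeping). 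Continuity of $H$ for the Skorohod metric plus the uniform integrability from the moment bounds gives $\mathbb{E}[Z_n] \to \mathbb{E}_{\mathbb{P}^W}[H(S^\nu)]$ along the subsequence, and assembling the three pieces yields the claimed bound.

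I expect the main obstacle to be exactly the tightness of the consistent price systems in the presence of insider information — this is the step that forced the restrictive assumptions in \cite{BDG} and that the ``good subclass'' device is designed to circumvent. Concretely, the difficulty is to choose the constraint on the trading-increment size $\delta_n$ so that simultaneously (i) the constrained super--replication price still dominates $\pi_n(Z_n)$ asymptotically (needs $\delta_n \to \infty$, or at least large enough), and (ii) the resulting dual penalty terms have enough uniform integrability to extract a weak limit and to pass $H$ through the limit (needs $\delta_n$ not too large). Balancing these, and then correctly identifying the limiting penalty $G$ together with the insider boundary term from the rescaled discrete sums, is where the real work lies; the Lipschitz hypothesis on $H$ (rather than mere continuity) is presumably used to control the error incurred by the truncation/constraint and by the time-discretization in a quantitative way.
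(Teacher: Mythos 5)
Your high-level roadmap (constrained subclass, duality, tightness, identification of the limit) essentially restates the strategy announced in the introduction, but the proposal stops short of the ideas that actually produce the bound, so as written it has genuine gaps. First, the logical role of the ``good'' subclass is inverted: since the super--replication price is an infimum over strategies, restricting to a subclass can only increase it, so there is nothing to prove about the restriction ``not increasing the asymptotic price''; the real task is the opposite one, namely to show that the constrained class is rich enough that its (larger) price is still asymptotically below the right-hand side. In the paper this is done constructively: after a reduction to finite multinomial models (Lemma \ref{lem4.1}) and a space--time discretization via the stopping times $\tau^{n,\epsilon}_k$, which replaces $Z_n$ by a claim $Z^{n,\epsilon,K}$ depending on only $K$ sampled values (Lemma \ref{lem4.2}), the subclass consists of strategies parametrized by coarse-filtration-adapted processes $(\phi,\psi)$ bounded by $\log n$, where between sampling times one trades $\gamma_i-\gamma_{i-1}=\frac{1}{\sqrt n}\bigl(\psi_k X_{i+N}+\frac{1}{2\Lambda}\sum_{j=i+1}^{i+N}X_j\bigr)$, i.e.\ one explicitly exploits the $N$-step look-ahead (Lemma \ref{lem4.3}, equation (\ref{4.7})). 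It is exactly this construction that generates the term $\frac{N}{4\Lambda}\sum_k\bigl(S^n_{[n\tau^{n,\epsilon}_{k+1}]}-S^n_{[n\tau^{n,\epsilon}_k]}\bigr)^2$ and, after a pointwise quadratic optimization over $\psi_k$, the penalty $\frac{1}{16\Lambda}\int G$ via the identity $G(z)=\min_{y\in[\underline{\sigma}^2,\overline{\sigma}^2]}(z-y)^2/y$. Your appeal to ``the elementary inequality relating $\sum(\text{increment})^2$ and the quadratic transaction cost'' does not by itself yield the constants $N/(4\Lambda)$ and $1/(16\Lambda)$ or the specific shape of $G$; without the explicit look-ahead strategy and the $\psi$-optimization this step does not go through.

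Second, the duality and tightness steps need more than you provide. The paper dualizes the constrained problem on the finite multinomial model via F\"ollmer--Kramkov optional decomposition under constraints, swaps inf and sup by a minimax theorem, and from the resulting near-optimal measure builds a process in the explicit class $\mathcal D^{\epsilon,\lambda}$ on the extended horizon $[0,1+\lambda]$ (piecewise constant, increments at most $2\epsilon$, time steps comparable to $\lambda/K$, and the martingale property (\ref{4.13}) coming from the dual penalty on $\sum_k\bigl|\mathbb E\bigl(S^n_{[n\tau^{n,\epsilon}_{k+1}]}-S^n_{[n\tau^{n,\epsilon}_k]}\big|\mathcal F^{n,\epsilon}_k\bigr)\bigr|$); see Proposition \ref{prop4.1}. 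Corollary \ref{cor3.1} is used there to show that this penalty term is small, not, as you suggest, to justify a truncation of the trading strategies. Tightness then comes essentially for free from the structural bounds defining $\mathcal D^{\epsilon,\lambda}$, the volatility density of the limiting martingale is identified via a convex-combination argument before Fatou and convexity of $G$ are applied, and a final randomization/time-change reduces the limit to controls $\nu\in\Gamma$ on the Wiener space. These concrete devices --- finitely sampled claims, the explicit insider strategies, minimax duality on a finite space, and the class $\mathcal D^{\epsilon,\lambda}$ --- are the substance of the proof, and they are absent from your proposal.
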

Let us notice that if $H$ is Lipschitz
  continuous with respect to the Skorohod metric then it satisfies the growth condition (\ref{campi1}) (with $\mu=1$).
We now combine the statements of the above propositions and state them as the
main theorem of our paper.
\begin{thm}\label{thm2.2}
Let $H:D[0,1] \to \mathbb R_{+}$ be nonnegative and Lipschitz
  continuous with respect to the Skorohod metric. Then,
\begin{equation*}
\lim_{n\rightarrow\infty} \pi_n(Z_n)=\sup_{\nu\in\Gamma}\mathbb E_{\mathbb P^W}\left(H(S^{\nu})-\frac{N}{4\Lambda}|S^{\nu}_1-s|^2-\frac{1}{16\Lambda}\int_{0}^1 G(\nu^2_t) dt\right).
\end{equation*}
\end{thm}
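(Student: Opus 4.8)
The plan is to read off Theorem~\ref{thm2.2} by combining the two one--sided estimates already recorded as Propositions~\ref{prop.campi1} and~\ref{prop.campi2}, so the only thing to verify is that a nonnegative, Skorohod--Lipschitz payoff $H$ meets the hypotheses of both. A Lipschitz $H$ is in particular Skorohod continuous, and it satisfies the growth bound~(\ref{campi1}) with $\mu=1$: taking $\chi=\mathrm{id}$ in the definition of the Skorohod metric, if $L$ is a Lipschitz constant then $|H(p)|\le |H(0)|+L\sup_{0\le t\le 1}|p_t|$ for every $p\in D[0,1]$, where $0$ denotes the constant path, so~(\ref{campi1}) holds with $C=\max(|H(0)|,L)$. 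Set
\begin{equation*}
\mathcal V:=\sup_{\nu\in\Gamma}\mathbb E_{\mathbb P^W}\left(H(S^{\nu})-\frac{N}{4\Lambda}|S^{\nu}_1-s|^2-\frac{1}{16\Lambda}\int_{0}^1 G(\nu^2_t)\,dt\right).
\end{equation*}
Proposition~\ref{prop.campi1} then gives $\lim\inf_{n\rightarrow\infty}\pi_n(Z_n)\ge\mathcal V$, while Proposition~\ref{prop.campi2}, whose hypotheses coincide with those of the theorem, gives $\lim\sup_{n\rightarrow\infty}\pi_n(Z_n)\le\mathcal V$. Together with the elementary inequality $\lim\inf_{n\rightarrow\infty}\pi_n(Z_n)\le\lim\sup_{n\rightarrow\infty}\pi_n(Z_n)$ this sandwiches the sequence, so $\lim_{n\rightarrow\infty}\pi_n(Z_n)$ exists and equals $\mathcal V$.

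Thus Theorem~\ref{thm2.2} carries no obstacle of its own; the substance, and the place where I expect the real work, is in the two propositions. For the lower bound I would dualize the $n$--step super--replication price as a supremum over suitably built approximate consistent price systems adapted to the insider filtration $\mathcal G_k=\sigma\{X_1,\dots,X_{k+N}\}$, and then study the scaling limit of the resulting penalized functional. Two penalties should emerge: $\frac{1}{16\Lambda}\int_{0}^1 G(\nu^2_t)\,dt$ is the familiar volatility--misspecification cost, penalizing through $G$ the deviation of the limiting variance rate $\nu^2_t$ from the admissible band $[\underline{\sigma}^2,\overline{\sigma}^2]$, while $\frac{N}{4\Lambda}|S^{\nu}_1-s|^2$ is new and measures --- after optimizing over trade sizes, which stay bounded precisely because of the quadratic cost $\Lambda\beta^2$ --- the systematic profit extractable by an investor who always sees $N$ periods ahead. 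The delicate points there are establishing tightness of the rescaled dual objects, identifying their limit points with processes $\nu\in\Gamma$, and handling the cross terms created by the overlap of successive $N$--step look--ahead windows.

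For the upper bound the plan follows the strategy indicated in the introduction and in \cite{BD}: restrict a priori to a ``good'' subclass of trading strategies whose increments obey appropriate size and regularity constraints; argue, using the Lipschitz continuity of $H$ and the Skorohod convergence of $\{S^{n}_{[nt]}\}_{t=0}^1$ to the limiting paths $S^{\nu}$, that passing to this subclass does not increase the asymptotic super--replication cost; and then use the constraints to obtain a dual representation regular enough that the associated consistent price systems are tight, so that the limit of the dual bound is dominated by $\mathcal V$. The main obstacle here --- and at the same time the improvement over \cite{BDG} --- is getting this tightness \emph{without} the artificial linear--growth truncation of the cost function used there: the constrained subclass has to be designed so as to deliver tightness while remaining rich enough that restricting to it does not raise $\lim\sup_{n\rightarrow\infty}\pi_n(Z_n)$.
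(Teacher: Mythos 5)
Your proposal is correct and coincides with the paper's own proof: Theorem \ref{thm2.2} is obtained exactly by combining the lower bound of Proposition \ref{prop.campi1} (whose growth hypothesis (\ref{campi1}) holds with $\mu=1$ for Lipschitz $H$, as you check) with the upper bound of Proposition \ref{prop.campi2}. Your additional remarks on how the two propositions themselves are proved are consistent in spirit with the paper's Sections \ref{sec:3} and \ref{sec:4}, but they are not needed for this statement.
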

\begin{proof}
Follows from Proposition \ref{prop.campi1} and Proposition \ref{prop.campi2}.
\end{proof}
\begin{rem}\label{rem1}
For the case $N=0$ (no insider information) our results are improvement
of Theorem 2.7 in \cite{BDG} in the sense that
we do not impose linear growth constraints on the transaction costs.
Moreover, from Proposition \ref{prop.campi1} and Remark \ref{rem3} it follows that
Theorem \ref{thm2.2} can be easily extended to payoffs
of the from
\begin{equation}\label{formm}
\hat H(p):=H(p)-\alpha |p_1-p_0|^2, \ \ \forall  p\in D[0,1]
\end{equation}
where $H$ is as above (nonnegative and Lipschitz continuous) and $\alpha>0$ is a positive constant.
This observation is important for the following
corollary.
\end{rem}
\begin{cor}\label{cor2.1}
Theorem \ref{thm2.2} says that asymptotically the super--replication price
of the European claim $H(S)$ for an insider who can peek $N$ trading periods into
the future equals to the super--replication price for an "usual" investor of the
claim $H(S)-\frac{N}{4\Lambda} |S_1-s|^2$. In other words, for super--replication
with quadratic transaction costs, the asymptotic value of peeking $N$ trading periods into the
future equals to holding (for zero costs) a European option with payoff
 $\frac{N}{4\Lambda} |S_1-s|^2$.
\end{cor}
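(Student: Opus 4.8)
The plan is to read the asserted identity directly off Theorem~\ref{thm2.2} together with its extension to payoffs of the form (\ref{formm}) recorded in Remark~\ref{rem1}. Write $\pi_n^N$ for the super--replication price in the $n$--step model when the investor's filtration is $\mathcal G_k=\sigma\{X_1,\dots,X_{k+N}\}$; thus $\pi_n^N$ is precisely the price $\pi_n$ studied throughout the paper, while $\pi_n^0$ is the price for the ``usual'' non--insider investor. For nonnegative Lipschitz $H$ set $\hat H(p):=H(p)-\tfrac{N}{4\Lambda}|p_1-p_0|^2$. What must be shown is that the insider's asymptotic super--replication price of $H(S)$ equals the usual investor's asymptotic super--replication price of $\hat H(S)$, i.e.
\begin{equation*}
\lim_{n\to\infty}\pi_n^N\!\left(H(\{S^n_{[nt]}\}_{t=0}^1)\right)=\lim_{n\to\infty}\pi_n^0\!\left(\hat H(\{S^n_{[nt]}\}_{t=0}^1)\right).
\end{equation*}

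First I would apply Theorem~\ref{thm2.2} to the insider. Since $H$ is nonnegative and Skorohod--Lipschitz, the left--hand side above equals
\begin{equation*}
V:=\sup_{\nu\in\Gamma}\mathbb E_{\mathbb P^W}\!\left(H(S^{\nu})-\frac{N}{4\Lambda}|S^{\nu}_1-s|^2-\frac{1}{16\Lambda}\int_0^1 G(\nu^2_t)\,dt\right).
\end{equation*}
Next I would apply the extension of Theorem~\ref{thm2.2} to payoffs of the form (\ref{formm}) from Remark~\ref{rem1}, taken in the case of \emph{no} insider information ($N=0$) and with the constant $\alpha:=N/(4\Lambda)>0$. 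Here one checks that $\hat H$ is admissible for that extension: the evaluation maps $p\mapsto p_0$ and $p\mapsto p_1$ are continuous with respect to the Skorohod metric (every admissible time change fixes $0$ and $1$), so $\hat H$ is Skorohod--continuous, and the quadratic correction gives $\hat H$ the polynomial growth (\ref{campi1}) with $\mu=2$ (it inherits $\mu=1$ from $H$). The extension then yields that the usual investor's asymptotic price of $\hat H(S)$ equals
\begin{equation*}
\sup_{\nu\in\Gamma}\mathbb E_{\mathbb P^W}\!\left(H(S^{\nu})-\alpha\,|S^{\nu}_1-S^{\nu}_0|^2-\frac{1}{16\Lambda}\int_0^1 G(\nu^2_t)\,dt\right),
\end{equation*}
with no $N/(4\Lambda)$ insider penalty, since we took $N=0$ there. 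By (\ref{2.last}) we have $S^{\nu}_0=s$ for every $\nu\in\Gamma$, so $|S^{\nu}_1-S^{\nu}_0|^2=|S^{\nu}_1-s|^2$ and this supremum is exactly $V$. Comparing the two displays gives the claimed equality of asymptotic super--replication prices.

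Finally, the ``in other words'' rephrasing is an interpretation of this identity, not a further computation: a usual investor who is endowed, at zero cost, with the European option paying $\tfrac{N}{4\Lambda}|S_1-s|^2$ is faced exactly with the reduced hedging problem for $H(S)-\tfrac{N}{4\Lambda}|S_1-s|^2$, whose asymptotic super--replication cost we have just shown coincides with the insider's asymptotic super--replication cost of $H(S)$; hence peeking $N$ trading periods into the future is asymptotically worth precisely this option. I do not expect a genuine obstacle here: the entire content sits in Theorem~\ref{thm2.2} and the Remark~\ref{rem1} extension, and the only points requiring a word of care are the Skorohod--continuity and the polynomial growth of $\hat H$, and the bookkeeping identity $S^{\nu}_0=s$ that matches the term $|S^{\nu}_1-S^{\nu}_0|^2$ appearing in the extension with the term $|S^{\nu}_1-s|^2$ in Theorem~\ref{thm2.2}.
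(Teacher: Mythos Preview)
Your proposal is correct and follows exactly the route the paper intends: the corollary is an interpretive statement whose content is obtained by applying Theorem~\ref{thm2.2} to the insider and then the extension of Theorem~\ref{thm2.2} to payoffs of the form~(\ref{formm}) (Remark~\ref{rem1}, via Proposition~\ref{prop.campi1} and Remark~\ref{rem3}) with $N=0$ and $\alpha=N/(4\Lambda)$, and matching the two expressions via $S^{\nu}_0=s$. Your side remarks on Skorohod continuity and growth of $\hat H$ are harmless but not needed, since Remark~\ref{rem1} already covers payoffs of exactly this form with $H$ nonnegative and Lipschitz.
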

We end this section with the following remark which discusses what happens if $N$ increases to infinity with $n$.
\begin{rem}
Let $H$ be as in Theorem \ref{thm2.2}.
First we argue that
\begin{equation}\label{campi10}
\lim_{N\rightarrow\infty}\sup_{\nu\in\Gamma}\mathbb E_{\mathbb P^W}\left(H(S^{\nu})-\frac{N}{4\Lambda}|S^{\nu}_1-s|^2-\frac{1}{16\Lambda}\int_{0}^1 G(\nu^2_t) dt\right)=H(\bar s)-\frac{\underline{\sigma}^2}{16\Lambda}
\end{equation}
where  $\bar s\in D[0,T]$ is the constant function $\bar s\equiv s$.
Indeed, by taking $\nu=0$ in the above left hand side we get
$$\lim\inf_{N\rightarrow\infty}\sup_{\nu\in\Gamma}\mathbb E_{\mathbb P^W}\left(H(S^{\nu})
-\frac{N}{4\Lambda}|S^{\nu}_1-s|^2-\frac{1}{16\Lambda}\int_{0}^1 G(\nu^2_t) dt\right)
\geq H(\bar s)-\frac{\underline{\sigma}^2}{16\Lambda}.$$
On the other hand,
from the Doob--Kolmogorov inequality
and the
Lipschitz continuity of $H$ it follows that there exists a constant
$L$ such that
$$\mathbb E_{\mathbb P^W}\left(H(S^{\nu})\right)\leq H(\bar s)+L\mathbb E_{\mathbb P^W}\left(\sup_{0\leq t\leq T}|S^{\nu}_t-s|\right)\leq
 H(\bar s)+2L\left(\mathbb E_{\mathbb P^W}\left(|S^{\nu}_1-s|^2\right)\right)^{1/2}
.$$
This, together with
the It\^o Isometry
$\mathbb E_{\mathbb P^W}\left(|S^{\nu}_1-s|^2\right)= \mathbb E_{\mathbb P^W}\left(\int_{0}^1 \nu^2_t dt\right),
$
and the simple inequality
$G(z)\geq \underline{\sigma}^2-\frac{2\overline{\sigma}z}{\underline{\sigma}}$, $\forall z>0$
gives
$$\lim\sup_{N\rightarrow\infty}\sup_{\nu\in\Gamma}\mathbb E_{\mathbb P^W}\left(H(S^{\nu})
-\frac{N}{4\Lambda}|S^{\nu}_1-s|^2-\frac{1}{16\Lambda}\int_{0}^1 G(\nu^2_t) dt\right)
\leq  H(\bar s)-\frac{\underline{\sigma}^2}{16\Lambda},$$
and (\ref{campi10}) follows.

From Theorem \ref{thm2.2} and (\ref{campi10}) we conclude that
if $N$ increases to infinity with $n$, then asymptotically the super--replication price
will be less or equal than $H(\bar s)-\frac{\underline{\sigma}^2}{16\Lambda}$.
This value does not depend on $\overline{\sigma}$, and so can be viewed as unreasonably low. Thus, roughly speaking, the
"right" scaling is to keep $N$ fixed as $n$ goes to infinity.
\end{rem}
\section{Proof of the lower bound}\label{sec:3}
In this section we prove
Proposition \ref{prop.campi1}.
We start with the following lemma.
\begin{lem}\label{lem3.0}
For any $n$,
\begin{equation}\label{3.0-}
\pi_n(Z_n)\geq\sup_{\mathbb P\in\mathcal P}\mathbb E_{\mathbb P}\left(Z_n-\frac{1}{4\Lambda}\sum_{i=0}^{n-1}\left(\mathbb E_{\mathbb P}\left(S^n_n|\mathcal G_i\right)-S^n_i\right)^2\right)
\end{equation}
where $\mathcal P$ is the set of all Borel probability measures on $\Omega$.
\end{lem}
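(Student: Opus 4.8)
The plan is to obtain the lower bound by a standard duality argument: any super--replicating strategy produces, under every $\mathbb P\in\mathcal P$, an inequality that upon taking expectations bounds $\pi_n(Z_n)$ from below, after which we optimize the hedging-cost term pointwise. Concretely, fix $\mathbb P\in\mathcal P$ and a super--replication strategy $\gamma$ starting from capital $y$, so that $y+Y^\gamma_n\geq Z_n$ pointwise, hence $y\geq\mathbb E_{\mathbb P}(Z_n)-\mathbb E_{\mathbb P}(Y^\gamma_n)$. The first step is therefore to bound $\mathbb E_{\mathbb P}(Y^\gamma_n)$ from above. Writing out
\[
Y^\gamma_n=\sum_{i=0}^{n-1}\gamma_i(S^n_{i+1}-S^n_i)-\Lambda\sum_{i=0}^{n-1}(\gamma_i-\gamma_{i-1})^2,
\]
I would use summation by parts (Abel summation) on the gains term to rewrite $\sum_{i=0}^{n-1}\gamma_i(S^n_{i+1}-S^n_i)$ in terms of the increments $\Delta\gamma_i:=\gamma_i-\gamma_{i-1}$ and the differences $S^n_n-S^n_i$. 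Explicitly, $\sum_{i=0}^{n-1}\gamma_i(S^n_{i+1}-S^n_i)=\sum_{i=0}^{n-1}\Delta\gamma_i\,(S^n_n-S^n_i)$ (using $\gamma_{-1}=0$), so that
\[
Y^\gamma_n=\sum_{i=0}^{n-1}\Big(\Delta\gamma_i\,(S^n_n-S^n_i)-\Lambda(\Delta\gamma_i)^2\Big).
\]

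The second step is to take $\mathbb E_{\mathbb P}$ and handle each summand using the $\mathcal G_i$-measurability of $\gamma_i$ (hence of $\Delta\gamma_i$). Conditioning on $\mathcal G_i$ and using the tower property,
\[
\mathbb E_{\mathbb P}\big(\Delta\gamma_i(S^n_n-S^n_i)\big)=\mathbb E_{\mathbb P}\Big(\Delta\gamma_i\big(\mathbb E_{\mathbb P}(S^n_n\mid\mathcal G_i)-S^n_i\big)\Big),
\]
since $S^n_i$ is $\mathcal G_i$-measurable. Now for each realization one applies the elementary scalar inequality $ab-\Lambda a^2\leq \frac{b^2}{4\Lambda}$ (the maximum over $a\in\mathbb R$ of $ab-\Lambda a^2$), with $a=\Delta\gamma_i$ and $b=\mathbb E_{\mathbb P}(S^n_n\mid\mathcal G_i)-S^n_i$. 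This yields
\[
\mathbb E_{\mathbb P}(Y^\gamma_n)\leq \frac{1}{4\Lambda}\sum_{i=0}^{n-1}\mathbb E_{\mathbb P}\Big(\big(\mathbb E_{\mathbb P}(S^n_n\mid\mathcal G_i)-S^n_i\big)^2\Big),
\]
and combining with $y\geq\mathbb E_{\mathbb P}(Z_n)-\mathbb E_{\mathbb P}(Y^\gamma_n)$ gives, for every super--replicating $y$ and every $\mathbb P$,
\[
y\geq\mathbb E_{\mathbb P}\Big(Z_n-\frac{1}{4\Lambda}\sum_{i=0}^{n-1}\big(\mathbb E_{\mathbb P}(S^n_n\mid\mathcal G_i)-S^n_i\big)^2\Big).
\]
Taking the infimum over such $y$ and then the supremum over $\mathbb P\in\mathcal P$ yields (\ref{3.0-}).

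I do not expect a serious obstacle here; the one point that requires a little care is the interchange of $\mathbb E_{\mathbb P}$ with the pointwise optimization in the inequality $ab-\Lambda a^2\le b^2/(4\Lambda)$ — this is fine because the bound holds pointwise in $\omega$ for the $\mathcal G_i$-measurable random variables $a=\Delta\gamma_i(\omega)$ and $b=(\mathbb E_{\mathbb P}(S^n_n\mid\mathcal G_i)-S^n_i)(\omega)$, so no measurable-selection issue arises; one should only check the relevant expectations are finite, which follows since $\gamma$ may be taken bounded (or, if not, the right-hand side is $-\infty$ and there is nothing to prove) and the increments of $S^n$ are bounded by construction. The summation-by-parts rearrangement and the use of the investor's filtration $\mathcal G_i$ (peeking $N$ steps ahead) are the substantive ingredients; the insider information enters precisely through $\mathbb E_{\mathbb P}(S^n_n\mid\mathcal G_i)$, and it is this conditional expectation that later produces the $\frac{N}{4\Lambda}|S^\nu_1-s|^2$ penalty in the scaling limit.
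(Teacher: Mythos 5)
Your proposal is correct and follows essentially the same route as the paper's proof: summation by parts to express the gains via $(\gamma_i-\gamma_{i-1})(S^n_n-S^n_i)$, the tower property with the $\mathcal G_i$-measurability of $\gamma_i-\gamma_{i-1}$ and $S^n_i$, and the pointwise inequality $\beta\alpha-\Lambda\beta^2\leq \alpha^2/(4\Lambda)$, then optimizing over $\mathbb P$. Your added remarks on integrability and measurability are harmless and do not change the argument.
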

\begin{proof}
Let $y\in\mathbb R$ be an initial capital and $\gamma$ be a trading strategy
such that
$$y+Y^{\gamma}_n(\omega)\geq Z_n(\omega), \ \ \forall\omega\in\Omega.$$
Then, for any $\mathbb P\in\mathcal P$
\begin{eqnarray*}
&\mathbb E_{\mathbb P}\left(Z_n\right)\leq y+\mathbb E_{\mathbb P}\left(\sum_{i=0}^{n-1} \left(\gamma_i(S^{n}_{i+1}-S^{n}_i)-\Lambda (\gamma_i-\gamma_{i-1})^2\right)\right)\\
&=y+\mathbb E_{\mathbb P}\left(\sum_{i=0}^{n-1} \left((\gamma_i-\gamma_{i-1})(S^n_n-S^n_i)-\Lambda (\gamma_i-\gamma_{i-1})^2\right)\right)\\
&=y+\mathbb E_{\mathbb P}\left(\sum_{i=0}^{n-1}  \left((\gamma_i-\gamma_{i-1})(\mathbb E_{\mathbb P}(S^n_n|\mathcal G_i)-S^n_i)-\Lambda (\gamma_i-\gamma_{i-1})^2\right)\right)\\
&\leq y+\mathbb E_{\mathbb P}\left(\frac{1}{4\Lambda}\sum_{i=0}^{n-1}\left(\mathbb E_{\mathbb P}\left(S^n_n|\mathcal G^n_i\right)-S^n_i\right)^2\right)
\end{eqnarray*}
where the last inequality follows from the simple inequality
$\beta\alpha-\Lambda \beta^2\leq \frac{\alpha^2}{4\Lambda}$, \ \ $\alpha,\beta\in\mathbb R$.
Thus,
$$y\geq \mathbb E_{\mathbb P}\left(Z_n-\frac{1}{4\Lambda}\sum_{i=0}^{n-1}\left(\mathbb E_{\mathbb P}\left(S^n_n|\mathcal G_i\right)-S^n_i\right)^2\right).$$
Since $\mathbb P\in\mathcal P$ was arbitrary we complete the proof.
\end{proof}
\begin{rem}
In fact,
there is an equality in (\ref{3.0-}). Namely, for the case where the payoff $Z_n$ is a continuous function
of $S^n_1,...,S^n_n$ (since $H$ is continuous, this holds true in our setup)
we have
 $$\pi_n(Z_n)=\sup_{\mathbb P\in\mathcal P}\mathbb E_{\mathbb P}\left(Z_n-\frac{1}{4\Lambda}\sum_{i=0}^{n-1}\left(\mathbb E_{\mathbb P}\left(S^n_n|\mathcal G_i\right)-S^n_i\right)^2\right).$$
The proof can be done by following the discretization technique from \cite{BDG} (see Theorem 2.2 there)
or the approach from \cite{CKT} which is based on representation results for increasing convex functionals.
We omit the proof, and provide only the lower bound (this is Lemma \ref{lem3.0}) which is essential for proving Proposition \ref{prop.campi1}.
\end{rem}
Next,
let $C[0,1]$ be the space of all continuous functions
$p:[0,1]\rightarrow\mathbb R$ equipped with the supremum norm
$||p||:=\sup_{0\leq t\leq 1}|p_t|$.
\begin{dfn}\label{dfn1}
Let $\Gamma_0\subset \Gamma$ be the set of all nonnegative, continuous processes $\nu=\{\nu_t\}_{t=0}^1$ defined on the Wiener space
$(\Omega^W,\mathcal F^W,(\mathcal F^W_t)_{0 \leq t \leq 1},\mathbb P^W)$ which are given by
$$\nu_t=f(t,W), \ \ t\in [0,1]$$
 where
$f:[0,1]\times C[0,1]\rightarrow\mathbb R_{+}$
satisfies the following conditions.\\
(i) For any $t\in [0,1]$ and $p,q\in C[0,1]$, if $p_{[0,t]}=q_{[0,t]}$ then
$f(t,p)=f(t,q)$. Namely, $f$ is a progressively measurable map.\\
(ii) The map $f$ is bounded, i.e. $\sup_{(t,p)\in [0,1]\times C[0,1]}f(t,p)<\infty$.\\
(iii) The function $f$ is bounded away from zero, i.e. $\inf_{(t,p)\in [0,1]\times C[0,1]}f(t,p)>0$.
\\
(iv) There is $\delta=\delta(f)>0$ such that $f_{[1-\delta,1]\times C[0,1]}\equiv\overline{\sigma}.$\\
(v) There exists a constant $\mathcal L=\mathcal L(f)$ such that
\begin{equation*}
|f(t_1,p)-f(t_2,q)| \leq \mathcal L\left(|t_1-t_2|+ ||p-q||\right) \ \ \forall (t_1,t_2,p,q)\in [0,1]^2\times C^2[0,1].
\end{equation*}
\end{dfn}
We arrive at the following result.
\begin{lem}\label{lem3.1}
Let $\nu\in\Gamma_0$. There exists a sequence of Borel probability measures $\{\mathbb Q_n\}_{n\in\mathbb N}$ on $\Omega$
such that
we have the weak convergence
\begin{equation}\label{3.1+}
\{S^{n}_{[nt]}\}_{t=0}^1\Rightarrow \{S^{\nu}_t\}_{t=0}^1 \ \mbox{on}  \ D[0,1].
\end{equation}
Namely, the distribution of $\{S^{n}_{[nt]}\}_{t=0}^1$ under $\mathbb Q_n$ converge weakly to the distribution of the process
$\{S^{\nu}_t\}_{t=0}^1$ given by (\ref{2.last}). Moreover,
\begin{equation}\label{3.1++}
\lim_{n\rightarrow\infty}\mathbb E_{\mathbb Q_n}
\left(\sum_{k=0}^{n-1}\left(\mathbb{E}_{\mathbb  Q_n}\left(S^{n}_n\;|\;\mathcal{G}_k\right)-S^{n}_k\right)^2 \right)=
\mathbb E_{\mathbb P^W}\left(N|S^{\nu}_1-s|^2+\frac{1}{4}\int_{0}^1 G(\nu^2_t) dt\right).
\end{equation}
\end{lem}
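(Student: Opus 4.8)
The plan is to construct the measures $\mathbb{Q}_n$ by discretizing the continuous-time dynamics of $S^\nu$ and then verify the two claimed limits separately. For the construction, recall that under $\mathbb{P}^W$ we have $S^\nu_t = s + \int_0^t \nu_u\,dW_u$ with $\nu_t = f(t,W)$ for an $f$ satisfying (i)--(v) of Definition \ref{dfn1}. On the grid $t_k = k/n$ I would define $\mathbb{Q}_n$ as the law of the increments $X_i$ obtained by setting $X_i := \sqrt{n}\,(\widehat S^n_{t_i} - \widehat S^n_{t_{i-1}})$, where $\widehat S^n$ is a suitably discretized version of $S^\nu$. The subtlety is that the increments of $S^\nu$ are Gaussian (hence unbounded), whereas $\Omega$ requires $\underline{\sigma} \le |X_i| \le \overline{\sigma}$. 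To respect this constraint I would instead let each $X_i$ be, conditionally on the past, a two-point random variable taking values $\pm f(t_{i-1}, \widehat W^n)$ — where $\widehat W^n$ is the linear interpolation of a discrete random walk — with conditional probabilities $1/2$ each, so that $\mathbb{E}_{\mathbb{Q}_n}(X_i \mid \mathcal{G}_{i-N-1}^{\text{past}}) = 0$ in the appropriate filtration and the conditional second moment equals $f(t_{i-1},\widehat W^n)^2 \approx \nu_{t_{i-1}}^2$. Conditions (ii) and (iii) ensure $\underline{\sigma} \le |X_i| \le \overline{\sigma}$ after possibly rescaling, and condition (iv) ($f \equiv \overline{\sigma}$ near $t=1$) will be used to control the terminal behavior.

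For the weak convergence \eqref{3.1+}, I would invoke a martingale functional central limit theorem (e.g.\ the Donsker-type invariance principle for martingale difference arrays, as in Jacod--Shiryaev): the array $\{X_i/\sqrt{n}\}$ has predictable quadratic variation $\frac{1}{n}\sum_{i \le [nt]} f(t_{i-1},\widehat W^n)^2$, which converges in probability to $\int_0^t \nu_u^2\,du$ by the Lipschitz continuity (v) and the fact that $\widehat W^n \Rightarrow W$; the Lindeberg condition is trivial since the increments are bounded by $\overline{\sigma}/\sqrt{n}$. This gives $\{S^n_{[nt]}\} \Rightarrow \{S^\nu_t\}$ on $D[0,1]$ (in fact in $C[0,1]$ since the limit is continuous).

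The main work is \eqref{3.1++}. I would first analyze the conditional expectation term: since $S^n_n - S^n_k = \frac{1}{\sqrt n}\sum_{i=k+1}^n X_i$ and the investor's filtration $\mathcal{G}_k = \sigma(X_1,\dots,X_{k+N})$ looks $N$ steps ahead, we get
\begin{equation*}
\mathbb{E}_{\mathbb{Q}_n}(S^n_n \mid \mathcal{G}_k) - S^n_k = \frac{1}{\sqrt n}\sum_{i=k+1}^{k+N} X_i + \frac{1}{\sqrt n}\sum_{i=k+N+1}^n \mathbb{E}_{\mathbb{Q}_n}(X_i \mid \mathcal{G}_k).
\end{equation*}
By the martingale property of the constructed increments the second sum vanishes (for $k \le n-N$), leaving just $\frac{1}{\sqrt n}\sum_{i=k+1}^{k+N} X_i = S^n_{k+N} - S^n_k$ — a window of exactly $N$ increments. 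Hence $\sum_{k=0}^{n-1}(\mathbb{E}_{\mathbb{Q}_n}(S^n_n\mid\mathcal{G}_k) - S^n_k)^2 \approx \sum_{k} (S^n_{k+N}-S^n_k)^2$ up to the $O(N)$ boundary terms near $k = n$, which is where condition (iv) enters to keep those terms controlled. Now $(S^n_{k+N}-S^n_k)^2 = \frac{1}{n}\big(\sum_{i=k+1}^{k+N}X_i\big)^2$; expanding the square and taking $\mathbb{Q}_n$-expectations, the diagonal terms contribute $\frac{1}{n}\sum_k \sum_{i=k+1}^{k+N}\mathbb{E}_{\mathbb{Q}_n}(X_i^2) \to \mathbb{E}_{\mathbb{P}^W}\big(\int_0^1 N\nu_t^2\,dt\big)$ — wait, more carefully, summing over $k$ each $X_i^2$ gets counted $N$ times, so this gives $N\cdot\frac{1}{n}\sum_i \mathbb{E}_{\mathbb{Q}_n}(X_i^2) \to N\,\mathbb{E}_{\mathbb{P}^W}\big(\int_0^1\nu_t^2\,dt\big) = N\,\mathbb{E}_{\mathbb{P}^W}(|S^\nu_1-s|^2)$ by Itô isometry. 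This accounts for the first term on the right side of \eqref{3.1++}. The second term, $\frac{1}{4}\mathbb{E}_{\mathbb{P}^W}\big(\int_0^1 G(\nu_t^2)\,dt\big)$, must come from the off-diagonal cross terms $\mathbb{E}_{\mathbb{Q}_n}(X_iX_j)$ for $i\ne j$ within a window: these are not zero because, while $\mathbb{E}(X_iX_j\mid\text{past}) = 0$ would hold if increments were conditionally independent given the full past, here the two-point structure with state-dependent amplitude creates correlations — and the precise bookkeeping of how $G$ (which measures the "distance" of $\nu^2$ from the band $[\underline{\sigma}^2,\overline{\sigma}^2]$) emerges from optimizing this construction is the delicate heart of the argument. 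I expect the cleanest route is to choose the two-point amplitudes not as $f(t_{i-1},\widehat W^n)$ itself but as the value that, subject to $|X_i|\in[\underline{\sigma},\overline{\sigma}]$, best matches a target related to $\nu$; the residual mismatch, squared and summed, produces exactly the $G(\nu^2_t)$ integrand with constant $1/4$. Verifying this identification — tracking the constant $1/16\Lambda$ versus $1/4\Lambda$ and the factor $N$ — and controlling all error terms uniformly via the Lipschitz bound (v) and boundedness (ii)--(iii), is where the bulk of the technical effort lies; the weak convergence part is comparatively routine.
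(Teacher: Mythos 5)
Your construction does not work as stated, and the step you defer (``the delicate heart'') is precisely the content of the lemma. The class $\Gamma_0$ only requires $f$ to be bounded and bounded away from zero (properties (ii)--(iii) of Definition \ref{dfn1}); it does \emph{not} force $\nu$ into the band $[\underline{\sigma},\overline{\sigma}]$, and the whole point of the $G$--term is to penalize excursions of $\nu^2$ outside $[\underline{\sigma}^2,\overline{\sigma}^2]$. If you keep conditionally symmetric signs with probability $1/2$ and truncate or rescale the amplitudes so that $|X_i|\in[\underline{\sigma},\overline{\sigma}]$, then the predictable quadratic variation of $S^n$ converges to $\int_0^t(\underline{\sigma}\vee\nu_u\wedge\overline{\sigma})^2du$, so the martingale CLT gives convergence to the \emph{truncated} diffusion and (\ref{3.1+}) fails whenever $\nu$ leaves the band. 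Moreover, under your symmetric construction $S^n$ is a $\mathbb Q_n$--martingale, hence $\mathbb E_{\mathbb Q_n}(S^n_n\,|\,\mathcal G_k)-S^n_k=S^n_{k+N}-S^n_k$ exactly, and the cross terms you hope will generate $G$ have zero expectation: the sign of $X_j$ is conditionally centered given $X_1,\dots,X_{j-1}$, so $\mathbb E_{\mathbb Q_n}(X_iX_j)=0$ for $i<j$ by the tower property, state--dependent amplitudes notwithstanding. The resulting limit would be $N\int_0^1(\underline{\sigma}\vee\nu_t\wedge\overline{\sigma})^2dt$ with no $\frac14\int_0^1G(\nu_t^2)dt$ at all, so even the first term of (\ref{3.1++}) comes out wrong for $\nu$ outside the band.

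The paper's proof supplies exactly the missing device: it \emph{tilts} the conditional sign probabilities. With $\sigma^n_k=\underline{\sigma}\vee f\wedge\overline{\sigma}$ and $\kappa^n_k=\frac12\bigl(f^2/|\sigma^n_k|^2-1\bigr)$, the measure $\mathbb Q_n$ in (\ref{3.2}) introduces serial correlation so that not $S^n$ but $M^n_k=S^n_k+\kappa^n_kX_k/\sqrt n$ is a $\mathbb Q_n$--martingale, with one--step volatility $\sqrt{1+2\kappa^n_k}\,\sigma^n_k\approx\nu$; since $M^n-S^n=O(n^{-1/2})$, this yields (\ref{3.1+}) even when $\nu$ exits the band. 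Property (iv) forces $\kappa^n=0$ near the terminal time, so $M^n_n=S^n_n$ and $\mathbb E_{\mathbb Q_n}(S^n_n\,|\,\mathcal G_k)=M^n_{k+N}$. Expanding $\sum_k(M^n_{k+N}-S^n_k)^2$ then produces three contributions: a diagonal term $\frac1n\sum_i|\kappa^n_i\sigma^n_i|^2\Rightarrow\frac14\int_0^1G(\nu^2_t)dt$ --- this, not the cross terms, is where $G$ comes from, namely from the gap between the predictor $M^n_{k+N}$ and $S^n_{k+N}$ --- a window term $\frac Nn\sum_i|\sigma^n_i|^2$ involving the truncated volatility, and genuinely nonvanishing cross terms $I^n_j$ which, via weak convergence of discrete stochastic integrals, each converge to $\frac12\bigl(\int_0^1\nu^2_tdt-\int_0^1(\underline{\sigma}\vee\nu_t\wedge\overline{\sigma})^2dt\bigr)$ and upgrade the window term to $N\int_0^1\nu^2_tdt=N\,\mathbb E_{\mathbb P^W}|S^\nu_1-s|^2$ by the It\^o isometry. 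Without such a correlation mechanism (or an equivalent one) your proposal cannot produce either limit in the lemma.
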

\begin{proof}
For $n\in\mathbb N$ and a sequence $A=(A_0,...,A_n)\in\mathbb R^{n+1}$ we denote by
$\widetilde{A}$ the continuous linear interpolation in $C[0,1]$ of the points $A_0,...,A_n$. Formally,
$$\widetilde{A}_t:=(nt-[nt])A_{[nt+1]}+(1+[nt]-nt)A_{[nt]}, \ \ t\in [0,1].$$

Let $\nu_t=f(t,W)$, $t\in [0,1]$ where $f$ satisfies conditions (i)--(v) in Definition \ref{dfn1}.
Fix $n\in\mathbb N$ and
define on $\Omega$ the processes $\{\sigma^{n}_k\}_{k=0}^n,\{\kappa^{n}_k\}_{k=0}^n,\{B^{n}_k\}_{k=0}^n$ and $\{\xi^{n}_k\}_{k=1}^n$ by the following recursion:
$$\sigma^{n}_0:=\underline\sigma\vee f(0,0)\wedge \overline\sigma, \ \  \kappa^n_0:=0, \ \ B^{n}_0:=0$$
and, for $k=1,...,n$
\begin{eqnarray*}
&\sigma^{n}_k:=\underline{\sigma}\vee f\left(\frac{k-1}{n}, \widetilde{{B}^{n}}\right)\wedge\overline{\sigma},\\
&\kappa^{n}_k:=\frac{1}{2}\left(\frac{f^2\left(\frac{k-1}{n}, \widetilde{{B}^{n}}\right)}{|\sigma^{n}_k|^2}-1\right),\\
&\xi^{n}_k:=\frac{X_k}{\sigma^{n}_k},\\
&B^{n}_k:=B^{n}_{k-1}+\frac{1}{\sqrt n}\frac{(1+\kappa^{n}_k)X_k-\kappa^{n}_{k-1}X_{k-1}}{\sqrt{1+2\kappa^{n}_k}\sigma^{n}_k}
\end{eqnarray*}
where recall, $X_1,X_2,....$ is the canonical process
on $\Omega$.

From properties (ii)--(iii)
in Definition \ref{dfn1} and the assumption $\underline{\sigma}>0$
it follows that $\kappa^n,\frac{1}{1+2\kappa^n}$ are uniformly bounded in $n$.
Hence,
$B^{n}_k-B^{n}_{k-1}=O(n^{-1/2})$ (unless we state otherwise, the $O$ term is uniform in space and time).
Moreover, the progressive measurability
of $f$ (property (i) in Definition \ref{dfn1})
ensures that its valuation in the definition of $\sigma^{n}_k$ and $\kappa^{n}_k$ depends on $B^{n}$ only via its already constructed values
$B^{n}_0,...,B^{n}_{k-1}$.

Next, we observe that
$\sigma^n_k\in [\underline{\sigma},\overline{\sigma}]$ for all $k$. Thus, there exists
a probability measure $\mathbb Q_n$ on the canonical space $\Omega$ (an atomic measure) such that under $\mathbb Q_n$ the random variables
$\xi^{n}_1,...,\xi^{n}_n=\pm 1$ $\mathbb Q_n$ a.s. and
\begin{equation}\label{3.2}
\mathbb Q_n\left(\xi^{n}_k=\pm 1|\xi^{n}_1,...,\xi^{n}_{k-1}\right):=
\frac{1}{2}\left(1\pm \frac{\kappa^{n}_{k-1}\sigma^{n}_{k-1}\xi^{n}_{k-1}}{\sigma^{n}_{k}(1+\kappa^{n}_k)}\right), \ \ k=1,...,n
\end{equation}
where we set $\xi^n_0\equiv 1$.
From property (v) in Definition \ref{dfn1}, the estimate
$B^{n}_k-B^{n}_{k-1}=O(n^{-1/2})$
and the fact that $\sigma^n$ and $1+2\kappa^n$ are bounded away from zero
it follows that for
sufficiently large $n$ the right hand side of (\ref{3.2}) is in $[0,1]$.
Thus, (for sufficiently large $n$) the probability
measure $\mathbb Q_n$ is well defined.

Now consider the stochastic process $\{M^n_k\}_{k=0}^n$ given by
$$M^{n}_k:=S^{n}_k+\frac{\kappa^{n}_k X_k}{\sqrt n}, \ \ k=0,...,n$$
where we set $X_0\equiv\sigma^{(n)}_0$.
Observe that (\ref{3.2}) yields that
$\{B^{n}_k\}_{k=0}^n$ and $\{M^{n}_k\}_{k=0}^n$ are $\mathbb Q_n$ martingales.
From the definition of $\mathbb Q_n$ and the regularity properties of the function $f$ we have
\begin{eqnarray*}
&\mathbb E_{\mathbb Q_n}\left(|B^{n}_k-B^{n}_{k-1}|^2\left|\right.X_1,...,X_{k-1}\right)\\
&=\frac{1}{n}\frac{(1+\kappa^{n}_k)^2|\sigma^{n}_k|^2-|\kappa^{n}_{k-1}\sigma^{n}_{k-1}|^2}{(1+2\kappa^{n}_k)|\sigma^{n}_k|^2}\\
&=\frac{1}{n}+O(n^{-3/2}).
\end{eqnarray*}
This together with
the Martingale Central Limit theorem (see Proposition 1 in \cite{R})
gives
\begin{equation}\label{3.4}
\{B^{n}_{[nt]}\}_{t=0}^1\Rightarrow W \ \mbox{on} \ D[0,1].
\end{equation}
Notice that
$$M^{n}_k-M^{n}_{k-1}=\sqrt{1+2\kappa^n_k}\sigma^n_k(B^{n}_k-B^{n}_{k-1}), \ \ k=1,...,n.$$
Thus, by combining the stability result Theorem 5.4 in \cite{DP} together with (\ref{3.4}), the regularity of $f$  and the simple estimate
$M^{n}-S^{n}=O(n^{-1/2})$ we obtain
\begin{equation}\label{3.5}
\left(M^{n}_{[nt]},S^{n}_{[nt]}\right)_{t=0}^1\Rightarrow \left(S^{\nu}_t,S^{\nu}_t\right)_{t=0}^1 \ \mbox{on} \ D[0,1]\times D[0,1]
\end{equation}
and
(\ref{3.1+}) follows.

Next, we show (\ref{3.1++}).
From property (iv) in Definition \ref{dfn1} it follows that for sufficiently large $n$, $S^n_n=M^n_n$. Hence,
$$M^n_k=\mathbb E_{\mathbb Q_n}\left(S^n_n|X_1,...,X_k\right), \ \ k=0,1,...,n.$$
This together with the equality $|X_k|^2=\frac{|\sigma^{n}_k|^2}{n}$ $\mathbb Q_n$ a.s., yields
\begin{eqnarray}
&\sum_{k=0}^{n-1} \left(\mathbb{E}_{\mathbb  Q_n}(S^{n}_n\;|\;\mathcal{G}_k)-S^{n}_k\right)^2\nonumber\\
&=\sum_{k=0}^{n-1-N}  \left(M^{n}_{k+N}-S^{n}_k\right)^2+O(1/n)\nonumber\\
&=\frac{1}{n}\sum_{k=0}^{n-1-N} \left(\kappa^{n}_{k+N} X_{k+N}+\sum_{j=1}^N X_{k+j}\right)^2+O(1/n)\nonumber\\
&=\frac{1}{n}\sum_{i=0}^{n-1} |\kappa^{n}_i|^2|\sigma^{n}_i|^2+\frac{N}{n}\sum_{i=0}^{n-1} |\sigma^{n}_i|^2+2\sum_{j=1}^N I^{n}_j+O(1/n)\label{3.6}
\end{eqnarray}
where
$$I^{n}_j:=\frac{1}{n}\left(\sum_{k=0}^{n-1-N} (1+\kappa^{n}_{k+N}) X_{k+N} X_{k+j}+\sum_{k=0}^{n-1-N}\sum_{i=j+1}^{N-1} X_{k+i} X_{k+j}\right).$$

From the continuity and boundedness of $f$ we get the weak convergence (on $\mathbb R$)
\begin{equation}\label{3.7-}
\frac{1}{n}\sum_{i=0}^{n-1} |\kappa^{n}_i|^2|\sigma^{n}_i|^2 \Rightarrow  \frac{1}{4}\int_{0}^1 G(\nu^2_t) dt
\end{equation}
and
\begin{equation}\label{3.7}
\frac{N}{n}\sum_{i=0}^{n-1} |\sigma^{n}_i|^2\Rightarrow N\int_{0}^1 \left(\underline\sigma\vee\nu_t\wedge\overline\sigma\right)^2 dt.
\end{equation}

Next, we fix $j$ and estimate $I^{n}_j$.
Writing $I^{n}_j$ as a telescopic sum we have
\begin{eqnarray*}
&I^{n}_j=\frac{1}{n}\sum_{i=j+1}^{N}\sum_{k=0}^{n-1-N} \left((1+\kappa^{n}_{k+i}) X_{k+i} X_{k+j}-\kappa^{n}_{k+i-1} X_{k+i-1} X_{k+j}\right)\\
&+\frac{1}{n}\sum_{k=0}^{n-1} \kappa^{n}_{k}X^2_{k}+O(1/n)\\
&=\sum_{i=j+1}^{N}\sum_{k=0}^{n-1-N}(S^{n}_{k+j}-S^{n}_{k+j-1})\left(M^{n}_{k+i}-M^{n}_{k+i-1}\right)\\
&+\frac{1}{n}\sum_{k=0}^{n-1} \kappa^{n}_{k}|\sigma^{n}_k|^2+O(1/n).
\end{eqnarray*}
From Theorem 4.1 in \cite{DP} and (\ref{3.5}) we obtain
$$\sum_{k=0}^{n-1-N}(S^{n}_{k+j}-S^{n}_{k+j-1})\left(M^{n}_{k+i}-M^{n}_{k+i-1}\right)
\Rightarrow \int_{0}^1 (S^{\nu}_t-S^{\nu}_t)dS^{\nu}_t=0, \ \ i=j+1,...,N.$$
Hence, for any $j=1,...,N$
$$I^n_j\Rightarrow \frac{1}{2}\left(\int_{0}^1 \nu^2_t dt-\int_{0}^1\left(\underline\sigma\vee\nu_t\wedge\overline\sigma\right)^2 dt\right).$$
This together with (\ref{3.6})--(\ref{3.7}) and the uniform boundedness of $\sigma^{n},\kappa^{n}$ gives
(\ref{3.1++}).
\end{proof}
We now have all the pieces in place that we need for the completion of the proof of the lower bound.\\
${}$\\
\textbf{Proof of Proposition \ref{prop.campi1}.}\\
\textbf{First Step:}
In this step we prove that (recall that $\Gamma_0\subset\Gamma$)
\begin{eqnarray*}
&\sup_{\nu\in\Gamma}\mathbb E_{\mathbb P^W}\left(H(S^{\nu})-\frac{N}{4\Lambda}|S^{\nu}_1-s|^2-\frac{1}{16\Lambda}\int_{0}^1 G(\nu^2_t) dt\right)\\
&=\sup_{\nu\in\Gamma_0}\mathbb E_{\mathbb P^W}\left(H(S^{\nu})-\frac{N}{4\Lambda}|S^{\nu}_1-s|^2-\frac{1}{16\Lambda}\int_{0}^1 G(\nu^2_t) dt\right).
\end{eqnarray*}
Let $\nu\in\Gamma$. There exists a constant $c>0$ such that $\nu\leq c$ $dt\otimes\mathbb P^W$ a.s.
Using similar density arguments as in Lemma~7.3 in \cite{DS}
gives
that there exists a sequence $\{\nu^{n}\}_{n=1}^{\infty}\subset\Gamma_0$ such that
$\nu_n\rightarrow \nu$ in probability (with respect to $dt\otimes\mathbb P^W$) and for any
$n$ we have $\nu^n\leq c$ $dt\otimes\mathbb P^W$ a.s.

From the growth condition (\ref{campi1})
we obtain that the random variables
$$H(S^{\nu^n})-\frac{N}{4\Lambda}|S^{\nu^n}_1-s|^2-\frac{1}{16\Lambda}\int_{0}^1 G(|\nu^n_t|^2) dt, \ \ n\in\mathbb N$$ are
uniformly integrable, and so from the continuity of $H$
\begin{eqnarray*}
&\mathbb E_{\mathbb P^W}\left(H(S^{\nu})-\frac{N}{4\Lambda}|S^{\nu}_1-s|^2-\frac{1}{16\Lambda}\int_{0}^1 G(\nu^2_t) dt\right)\\
&=\lim_{n\rightarrow\infty}\mathbb E_{\mathbb P^W}\left(H(S^{\nu^n})-\frac{N}{4\Lambda}|S^{\nu^n}_1-s|^2-\frac{1}{16\Lambda}\int_{0}^1 G(|\nu^n_t|^2) dt\right)
\end{eqnarray*}
and the first step is completed.
\\
\textbf{Second Step:}
In view of the first step, in order to complete the proof of Proposition \ref{prop.campi1}
 it remains to show that for any $\nu\in\Gamma_0$
 \begin{equation*}
 \lim\inf_{n\rightarrow\infty}\pi_n(Z_n)\geq \mathbb E_{\mathbb P^W}\left(H(S^{\nu})-\frac{N}{4\Lambda}|S^{\nu}_1-s|^2-\frac{1}{16\Lambda}\int_{0}^1 G(\nu^2_t) dt\right).
 \end{equation*}
 Choose $\nu\in\Gamma_0$. Consider the probability measures $\mathbb Q_n$, $n\in\mathbb N$ and
the corresponding martingales $M^n$, $n\in\mathbb N$ which are constructed in Lemma \ref{lem3.1}.

Observe that $|M^{n}_k-M^{n}_{k-1}|=O(n^{-1/2})$ for all $k\leq n$. Thus, from Lemma 3.3 in \cite{BDP}
it follows that (recall the constant $\mu$ from Proposition \ref{prop.campi1}).
$$\sup_{n\in\mathbb N}\mathbb E_{\mathbb Q_n}\left(\max_{0\leq k\leq n}|M^n_k|^{2\mu}\right)<\infty.$$

From (\ref{campi1}) and the simple bound $|S^n-M^n|=O(n^{-1/2})$ we obtain that
$H\left(\{S^{n}_{[nt]}\}_{t=0}^1\right)$, $n\in\mathbb N$
are uniformly integrable. This together with (\ref{3.1+}) and the continuity of $H$ gives
\begin{equation}\label{campi2}
\lim_{n\rightarrow\infty}\mathbb E_{\mathbb Q_n}\left(H\left(\{S^{n}_{[nt]}\}_{t=0}^1\right)\right)=
\mathbb E_{\mathbb P^W}\left(H(S^{\nu})\right).
\end{equation}
Finally, from
Lemma \ref{lem3.0},
(\ref{3.1++}) and (\ref{campi2})
it follows that
\begin{eqnarray*}
&\lim\inf_{n\rightarrow\infty} \pi_n(Z_n)\\
&\geq \lim\inf\mathbb E_{\mathbb Q_n}\left(H\left(\{S^{n}_{[nt]}\}_{t=0}^1\right)-\frac{1}{4\Lambda}\sum_{i=0}^{n-1}\left(\mathbb E_{\mathbb P}\left(S^n_n|\mathcal G_i\right)-S^n_i\right)^2\right)\\
&=\mathbb E_{\mathbb P^W}\left(H(S^{\nu})-\frac{N}{4\Lambda}|S^{\nu}_1-s|^2-\frac{1}{16\Lambda}\int_{0}^1 G(\nu^2_t) dt\right)
\end{eqnarray*}
and the proof is completed.
\qed

\section{Proof of the upper bound}\label{sec:4}
In this section we prove Proposition \ref{prop.campi2}.

We start with a discretization of the set $\Omega$.
For any $k\in\mathbb N$
let $\Omega^k$ be the set of all sequences $(x_1,x_2,....)\in\Omega$ such that for any
$i\in\mathbb N$ we have
$$|x_i|=\frac{j}{k}\underline{\sigma}+\left(1-\frac{j}{k}\right)\overline{\sigma}$$
for some $j=j(i)\in\{0,1,...,k\}$.

For any $k,n\in\mathbb N$ consider the $n$--step financial market supported on the set $\Omega^k\subset\Omega$.
Observe that this is a (finite) multinomial model.
For any map $Z:\Omega^k\rightarrow\mathbb R$ we define the corresponding super--replication price
\begin{equation*}
\pi^k_n(Z)=\inf\left\{y \in \mathbb{R}: \ \exists\gamma
  \text{ with }
y+Y^\gamma_n(\omega)\geq Z(\omega) \ \forall \omega\in \Omega^k\right\}
\end{equation*}
where the definition of the trading strategy $\gamma$ and the corresponding wealth process $Y^{\gamma}$ are the same as in Section \ref{sec:2}.
\begin{lem}\label{lem4.1}
For any $n\in\mathbb N$,
$$\lim_{k\rightarrow\infty}\pi^k_n(Z_n)=\pi_n(Z_n).$$
\end{lem}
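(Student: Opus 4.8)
The goal is to show that for fixed $n$, the super--replication prices $\pi^k_n(Z_n)$ on the discretized spaces $\Omega^k$ converge to the super--replication price $\pi_n(Z_n)$ on the full space $\Omega$ as $k\to\infty$. The plan is to prove two inequalities. Since $\Omega^k\subset\Omega$, any strategy that super--replicates on $\Omega$ also super--replicates on $\Omega^k$, so trivially $\pi^k_n(Z_n)\le\pi_n(Z_n)$ for every $k$, which gives $\limsup_{k\to\infty}\pi^k_n(Z_n)\le\pi_n(Z_n)$. The substantial direction is $\liminf_{k\to\infty}\pi^k_n(Z_n)\ge\pi_n(Z_n)$, i.e.\ one must show that super--replicating on the finite grid $\Omega^k$ is asymptotically as hard as super--replicating everywhere.

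For the hard direction I would argue by a compactness/stability argument. Fix $n$ and pick, for each $k$, a near--optimal strategy $\gamma^{(k)}$ on $\Omega^k$ with initial capital $y_k$ close to $\pi^k_n(Z_n)$; we may assume $y_k$ is bounded (the trivial upper bound $\pi^k_n\le\pi_n(Z_n)<\infty$ and a lower bound analogous to Corollary~\ref{cor3.1} on each finite model control it). A trading strategy in the $n$--step model is, after conditioning on the finitely many relevant coordinates, just finitely many real numbers $\gamma_k=\gamma(k,x_1,\dots,x_{k+N})$; however the domain $\Omega^k$ changes with $k$. The key point is that the grids $\Omega^k$ become dense in $\Omega$: every $\omega=(x_1,x_2,\dots)\in\Omega$ is a limit (coordinatewise, hence in the product topology) of points $\omega^{(k)}\in\Omega^k$. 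Because $H$ is continuous with respect to the Skorohod metric and $S^n$ depends continuously on the first $n$ coordinates, $Z_n$ is a continuous (indeed uniformly continuous on the compact $\Omega$) function of $(x_1,\dots,x_n)$. Thus the plan is to extract from $\gamma^{(k)}$ a limiting strategy $\gamma^{(\infty)}$ on $\Omega$ along a subsequence, and pass to the limit in the super--replication inequality $y_k+Y^{\gamma^{(k)}}_n(\omega^{(k)})\ge Z_n(\omega^{(k)})$ for approximating sequences $\omega^{(k)}\to\omega$.

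To make the extraction rigorous I would first reduce to strategies with a uniform bound on $|\gamma_k|$: using the inequality $\beta\alpha-\Lambda\beta^2\le\alpha^2/(4\Lambda)$ exactly as in the proof of Lemma~\ref{lem3.0}, trading more than a fixed amount (depending only on $\overline\sigma$, $\Lambda$, $n$ and the growth of $H$) is never worthwhile, so one may truncate $\gamma^{(k)}$ at a $k$--independent level without increasing the required capital by more than a vanishing amount. One then has, for each fixed index $k$ and each fixed finite pattern of sign choices of $(x_1,\dots,x_{k+N})$, a bounded sequence of real numbers indexed by the discretization parameter; the grid values themselves converge, and by a diagonal extraction over the (finitely many in each model, but growing) patterns one obtains a subsequence along which the strategies and the grids converge. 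Passing to the limit in the (closed, by continuity) super--replication constraint yields a strategy on $\Omega$ with capital $\liminf_k y_k$, whence $\pi_n(Z_n)\le\liminf_k\pi^k_n(Z_n)$.

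The main obstacle is the bookkeeping in this limiting argument: the domains $\Omega^k$ and the measurability structure of the strategies (the number of $\mathcal G_k$--atoms) vary with $k$, so one must be careful to set up a single ambient space (e.g.\ realize every $\gamma^{(k)}$ as a function on $\Omega$ via the nearest--grid--point projection, which is measurable and converges pointwise to the identity) and then apply a compactness argument for uniformly bounded families of such functions together with uniform continuity of $Z_n$ and of the payoff map $\omega\mapsto Y^\gamma_n(\omega)$ in $(\gamma,\omega)$ jointly. Once the problem is phrased on the fixed space $\Omega$ with uniformly bounded, truncated strategies, the convergence is a routine consequence of compactness and continuity; the only real content is verifying that the truncation and the grid--projection do not destroy the super--replication inequality in the limit.
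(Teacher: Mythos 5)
Your overall architecture (the trivial inequality $\pi^k_n(Z_n)\le\pi_n(Z_n)$ from $\Omega^k\subset\Omega$, plus a grid--density/continuity argument for the converse) is the same as the paper's, which simply mimics Lemma 3.3 of \cite{BDG} after writing $Z_n=F_n(S^n_1,\dots,S^n_n)$ with $F_n$ continuous. However, the one step the paper singles out as requiring a new argument in the insider setting --- that the super--replicating strategies $\gamma^k$ on $\Omega^k$ may be taken uniformly bounded in $k$ --- is exactly the step your proposal does not justify. Your truncation claim ("trading more than a fixed amount is never worthwhile, so one may truncate at a $k$--independent level") does not follow from the inequality $\beta\alpha-\Lambda\beta^2\le\alpha^2/(4\Lambda)$: truncating $\gamma^k$ reduces transaction costs but changes the trading gains $\sum_i\gamma^k_i(S^n_{i+1}-S^n_i)$ pathwise in an uncontrolled direction, so the truncated strategy need not dominate $Z_n$ on every $\omega\in\Omega^k$, and nothing in your sketch repairs this. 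The paper's argument is an a priori bound rather than a truncation: since $\gamma\equiv 0$ shows $\pi^k_n(Z_n)\le\|Z_n\|_\infty<\infty$, one may normalize $y\le\|Z_n\|_\infty$; then $Z_n\ge 0$, $\gamma^k_{-1}=0$ and $|S^n_{i+1}-S^n_i|\le\overline{\sigma}/\sqrt n$ turn the super--replication inequality into
\begin{equation*}
\|Z_n\|_\infty+\sum_{i=0}^{n-1}\Bigl(\frac{\overline{\sigma}}{\sqrt n}\,|\gamma^k_i|-\Lambda\,(\gamma^k_i-\gamma^k_{i-1})^2\Bigr)\ge 0
\qquad\text{on }\Omega^k,
\end{equation*}
and since reaching a large position from $0$ costs quadratically while the gain bound is only linear, this forces $\sup_k\max_i\sup_{\omega\in\Omega^k}|\gamma^k(i,\omega)|<\infty$ with a bound depending only on $n,\overline{\sigma},\Lambda,\|Z_n\|_\infty$.

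A second, related weak point is your compactness extraction of a limiting strategy $\gamma^{(\infty)}$. A uniformly bounded family of measurable functions on the uncountable space $\Omega$ has no sequential compactness for pointwise convergence, and the atoms of the multinomial models change with $k$, so the diagonal argument you sketch does not go through as stated; moreover a pathwise (for all $\omega$) constraint does not survive the weak--type limits one could extract instead. The point of the uniform bound is precisely that no limit strategy is needed: transfer $\gamma^k$ to $\Omega$ via the coordinatewise nearest--grid--point projection (which is adapted, also to the enlarged filtration $\mathcal G$), and estimate the discrepancy between $Y^{\gamma^k}_n$ and $Z_n$ evaluated at $\omega$ and at its projection using the uniform bound on $\gamma^k$, the mesh $O(1/k)$ of the grid and the uniform continuity of $F_n$ on the compact $\Omega$. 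This yields $\pi_n(Z_n)\le\pi^k_n(Z_n)+\varepsilon_k$ with $\varepsilon_k\to 0$, which is the content of the argument from \cite{BDG} that the paper invokes; your proposal has the right ingredients (projection, uniform continuity, boundedness) but assembles them through an unjustified truncation and an unnecessary, and in fact problematic, compactness step.
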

\begin{proof}
Fix $n$. Clearly $\pi^k_n(Z_n)\leq\pi_n(Z_n)$ for all $k$. Thus, we need to show that
 $\lim\inf_{k\rightarrow\infty}\pi^k_n(Z_n)\geq\pi_n(Z_n)$.
 Since $H:D[0,1]\rightarrow\mathbb R$ is continuous then there exists a continuous function
 $F_n:\mathbb R^{n}\rightarrow\mathbb R_{+}$ such that
 $Z_n=F_n(S^n_1,...,S^n_n)$.

Hence, we can mimic the proof of Lemma 3.3 in \cite{BDG}. The only difference
is that, since in the present setup we have insider information, we need to provide a different argument to the fact that
without loss of generality we can assume that
for any $k$ the corresponding trading strategy
$\gamma^k$ is uniformly bounded in $k$.

To that end choose $k\in\mathbb N$. Let $y$ be an initial capital and $\gamma^k$ be a trading strategy such that
\begin{equation}\label{4.2}
y+Y^{\gamma^k}_n(\omega)\geq Z_n(\omega), \ \ \forall \omega\in\Omega^k.
\end{equation}
Since  $H:D[0,1]\rightarrow\mathbb R$ is continuous then
$||Z_n||_{\infty}:=\sup_{\omega\in\Omega}Z_n(\omega)<\infty$.
Thus, without loss of generality we assume that $y\leq ||Z_n||_{\infty}$. This together with (\ref{4.2}) and the fact that $Z_n\geq 0$ gives
$$||Z_n||_{\infty}+\sum_{i=0}^{n-1}\left(\frac{\overline{\sigma}}{\sqrt n}|\gamma^k_i|-\Lambda  (\gamma^k_i-\gamma^k_{i-1})^2\right)\geq 0,
\ \ \forall \omega\in\Omega^k$$
and so (recall that $\gamma^k_{-1}\equiv 0$), $$\sup_{k\in\mathbb N}\max_{0\leq i\leq n-1}\sup_{\omega\in\Omega^k}|\gamma^k(i,\omega)|<\infty$$
as required.
\end{proof}
From Lemma \ref{lem4.1} it follows that for any $n\in\mathbb N$ there exists $k=k(n)$
such that
$\pi^{k(n)}_n(Z_n)>\pi_n(Z_n)-\frac{1}{n}$.
For simplicity we denote
$\Omega_n:=\Omega^{k(n)}_n$ and
$\hat\pi_n(\cdot):=\pi^{k(n)}_n$, i.e.
\begin{equation}\label{4.new}
\hat\pi_n(Z_n)>\pi_n(Z_n)-\frac{1}{n}.
\end{equation}
Thus, in order to prove Proposition \ref{prop.campi2} we need
to establish the inequality
\begin{equation}\label{4.3}
\lim\sup_{n\rightarrow\infty} \hat\pi_n(Z_n)\leq\sup_{\nu\in\Gamma}\mathbb E_{\mathbb P^W}\left(H(S^{\nu})-\frac{N}{4\Lambda}|S^{\nu}_1-s|^2-\frac{1}{16\Lambda}\int_{0}^1 G(\nu^2_t) dt\right).
\end{equation}
The fact that the super--replication price $\hat\pi_n(\cdot)$ is defined in the multinomial model setup will be used in our duality analysis
in Proposition \ref{prop4.1}.

Our preliminary step in the asymptotic analysis of the
super--replication prices
$\hat\pi_n(Z_n)$, $n\in\mathbb N$
is the space--time discretizations
of the price processes.
Specifically, for any
$\epsilon\in (0,1)$ and $n\in\mathbb N$ define (on $\Omega_n$)
a sequence of stopping times by the following recursive relations: $\tau^{n,\epsilon}_0:=0$ and for
$k\in\mathbb N$
$$\tau^{n,\epsilon}_k:=1\wedge\inf\left\{t \geq \tau^{n,\epsilon}_{k-1}\;:\; |S^n_{[nt]}-S_{[n\tau^{n,\epsilon}_{k-1}]}| \geq \epsilon \text{ or } |t-\tau^{n,\epsilon}_{k-1}| \geq \epsilon^2\right\}.$$
Set,
\begin{equation}\label{4.recall}
S^{n,\epsilon}_t:=\sum_{k\in\mathbb N} S^n_{[n\tau^{n,\epsilon}_{k-1}]}
\mathbb I_{t\in [\tau^{n,\epsilon}_{k-1},\tau^{n,\epsilon}_k)}+S^n_n\mathbb{I}_{t=1}, \ \ 0 \leq t \leq 1.
\end{equation}
As usual $\mathbb I$ denotes the indicator function.

Observe that since $H$ is Lipschitz in the Skorohod metric then
$H(p)$ has a linear growth in $||p-p_0||:=\sup_{0\leq t\leq T}|p_t-p_0|$. Thus, for any $\lambda>0$
there exists $c(\lambda)>0$  such that
\begin{equation}\label{4.4}
H(p) \leq c(\lambda)+\lambda^2||p-p_0||^2, \ \ \forall p\in D[0,1].
\end{equation}
 Without loss of generality we assume that $c(\lambda)>2$.
Set,
\begin{eqnarray*}
&K=K(\epsilon,\lambda):=[c(\lambda)/(\epsilon\lambda)^2]+1\\
&Z^{n,\epsilon,K}:= H(S^{n,\epsilon}) \mathbb I_{\{\tau^{n,\epsilon}_K=1\}}.
\end{eqnarray*}
\begin{lem}\label{lem4.2}
There exists $\lambda_0\in (0,1)$ such that for
any $\epsilon\in (0,1)$ and $\lambda<\lambda_0$,
$$\lim\sup_{n\rightarrow\infty}\left(\hat\pi_n(Z_n)-(1-\lambda)\hat\pi_n(Z^{n,\epsilon,K}/(1-\lambda))\right)\leq O(\epsilon+\lambda).$$
\end{lem}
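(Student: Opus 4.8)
The plan is to establish the estimate in three stages: a pathwise comparison of $Z_n$ with the truncated payoff $Z^{n,\epsilon,K}$, an algebraic splitting of trading strategies that uses the quadratic form of the cost, and the construction of a cheap \emph{robust} super-hedge of the remainder, trading only on the random grid $\{\tau^{n,\epsilon}_k\}_k$. For the comparison: on $\{\tau^{n,\epsilon}_K=1\}$ the piecewise constant path $S^{n,\epsilon}$ stays within $\epsilon$ of $\{S^n_{[nt]}\}$ in space and within $\epsilon^2$ in time, apart from an $O(n^{-1/2})$ overshoot at the crossing instants, so the Skorohod distance is $O(\epsilon+n^{-1/2})$ and the Lipschitz property of $H$ gives $H(\{S^n_{[nt]}\})\le H(S^{n,\epsilon})+O(\epsilon+n^{-1/2})$ there; since $H\ge 0$ this yields the pointwise bound $Z_n\le Z^{n,\epsilon,K}+O(\epsilon+n^{-1/2})+R_n$ with $R_n:=H(\{S^n_{[nt]}\})\,\mathbb I_{\{\tau^{n,\epsilon}_K<1\}}\ge 0$.

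For the splitting, if $\tilde\gamma$ super-replicates $Z^{n,\epsilon,K}/(1-\lambda)$ and $\gamma'$ super-replicates $R_n$ under the enlarged cost coefficient $\Lambda/\lambda$, I would take $\gamma:=(1-\lambda)\tilde\gamma+\gamma'$; the elementary inequality $((1-\lambda)a+b)^2\le (1-\lambda)a^2+\lambda^{-1}b^2$ gives $Y^\gamma_n\ge (1-\lambda)Y^{\tilde\gamma}_n+\bigl(\sum_i\gamma'_i(S^n_{i+1}-S^n_i)-\tfrac{\Lambda}{\lambda}\sum_i(\gamma'_i-\gamma'_{i-1})^2\bigr)$, and therefore
\[
\hat\pi_n(Z_n)\ \le\ (1-\lambda)\,\hat\pi_n\!\bigl(Z^{n,\epsilon,K}/(1-\lambda)\bigr)+O(\epsilon+n^{-1/2})+\hat\pi^{\,\Lambda/\lambda}_n(R_n),
\]
where $\hat\pi^{\,\Lambda/\lambda}_n$ denotes the super-replication functional with cost coefficient $\Lambda/\lambda$. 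So it remains to show $\limsup_{n\to\infty}\hat\pi^{\,\Lambda/\lambda}_n(R_n)=O(\epsilon+\lambda)$.

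To hedge $R_n$ cheaply I would use the Lipschitz bound together with $H(p)\le c(\lambda)+\lambda^2\|p-p_0\|^2$ to write $R_n\le\bigl(H(\bar s)+L^2/\lambda^2\bigr)\mathbb I_{\{\tau^{n,\epsilon}_K<1\}}+\lambda^2\|S^n-s\|^2\mathbb I_{\{\|S^n-s\|>L/\lambda^2\}}$ (on the last event $L\|S^n-s\|\le\lambda^2\|S^n-s\|^2$). For the first, bounded-on-the-bad-event, term I would use the interval-reference strategy $\gamma'_i:=2\theta\bigl(S^n_i-S^n_{[n\tau^{n,\epsilon}_{k-1}]}\bigr)$ on each grid interval: a discrete It\^o identity shows it produces the terminal value $\theta\bigl(\sum_k(\Delta a_k)^2-\mathrm{QV}_n\bigr)$, where $a_k:=S^n_{[n\tau^{n,\epsilon}_k]}$ and $\mathrm{QV}_n:=\tfrac1n\sum_i X_i^2\le\overline\sigma^2$, at a cost at most $\tfrac{4\Lambda}{\lambda}\theta^2\bigl(\mathrm{QV}_n+\sum_k(\Delta a_k)^2\bigr)$; for $\theta\le\lambda/(8\Lambda)$ the net value is $\ge\tfrac\theta2\sum_k(\Delta a_k)^2-O(\theta)\ge 0$ always, and on $\{\tau^{n,\epsilon}_K<1\}$ at least $K/2$ of the $K$ crossings are genuine ($|\Delta a_k|\ge\epsilon$), so it is $\ge\tfrac\theta4K\epsilon^2-O(\theta)$; since $K\epsilon^2\asymp c(\lambda)/\lambda^2\gtrsim L^2/\lambda^4$, a choice $\theta\asymp\lambda^2$ dominates $H(\bar s)+L^2/\lambda^2$ from capital $O(\lambda^2)$. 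For the genuinely large term I would use a lookback-type strategy $\gamma''_i:=2\lambda^2(R_i-s)$, where $R_i$ follows $S^n_i$ while $S^n_i$ lies within $\epsilon$ of its running maximum (or running minimum) and is frozen otherwise; since $|R_i-R_{i-1}|\le|S^n_i-S^n_{i-1}|$ its cost is $\le\tfrac{4\Lambda}{\lambda}\lambda^8\mathrm{QV}_n=O(\lambda^{7})$ uniformly in $n$, and the construction is designed so that the strategy ``banks'' the running extreme, producing at least $\lambda^2\|S^n-s\|^2-O(\lambda^2)$. Adding the two strategies (absorbing the cross term via $(a+b)^2\le 2a^2+2b^2$) and the $O(\epsilon)$ from the first stage gives $\hat\pi^{\,\Lambda/\lambda}_n(R_n)=O(\epsilon+\lambda)$, provided $\lambda_0$ is fixed so small that the constraints $\theta\le\lambda/(8\Lambda)$, $\lambda\le 1/(12\Lambda)$, etc., all hold.

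I expect the last stage — and within it the lookback strategy — to be the main obstacle: its transaction cost must be controlled \emph{uniformly in $n$}, and the paths that climb to a high level and then return near $s$ defeat the obvious candidates, since $2\lambda^2(S^n_i-s)$ only sees the terminal displacement while a position that simply stays long at new highs bleeds on the way down. The remedy is to unwind as soon as the path retreats by $\epsilon$ from a running extreme, symmetrically on both sides, so that no frozen position ever loses more than an $O(\mathrm{QV}_n)=O(1)$ amount; checking that this really delivers the claimed lower bound on the strategy's terminal value, for \emph{every} scenario $\omega$, is the delicate calculation. This is precisely the mechanism by which the present ``good subclass of constrained strategies'' avoids the linear-growth restriction on the cost imposed in \cite{BDG}, extending the idea of \cite{BD} to the quadratic-cost setting.
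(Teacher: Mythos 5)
Your overall architecture is the same as the paper's: a pathwise comparison $Z_n\le Z^{n,\epsilon,K}+(\text{remainder})+O(\epsilon)$, a convex splitting of strategies (your inequality $((1-\lambda)a+b)^2\le(1-\lambda)a^2+\lambda^{-1}b^2$ is exactly the convexity of the cost that the paper invokes), and a cheap robust hedge of the remainder. Your treatment of the bounded bad-event term via the interval-reference strategy is sound (note, though, that the claim ``at least $K/2$ of the crossings are genuine on $\{\tau^{n,\epsilon}_K<1\}$'' needs the counting argument: time-triggered crossings number at most $\epsilon^{-2}\le K\lambda^2/c(\lambda)\le K/2$). The genuine gap is the piece you yourself flag and defer: the uniform-in-$n$ robust super-hedge of $\lambda^2\sup_t|S^n_t-s|^2$. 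The specific lookback mechanism you sketch does not work as described. With the position $2\lambda^2(R_i-s)$ and $R$ merely \emph{frozen} off the running extreme, a path that rises to level $M$ and falls back to $s$ leaves you with terminal wealth $\approx\lambda^2(M-s)^2-2\lambda^2(M-s)^2<0$, a macroscopic shortfall against the required payoff $\lambda^2(M-s)^2$. If instead you \emph{unwind} whenever the path retreats by $\epsilon$ from the running extreme, the position jump at each unwind/re-entry is of macroscopic size $\asymp\lambda^2(\bar S-s)$, and a path oscillating in the $\epsilon$-band below its running maximum can trigger on the order of $\sqrt n/\epsilon$ such cycles, so the transaction costs (even more so with the enlarged coefficient $\Lambda/\lambda$) are not bounded uniformly in $n$. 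Smearing the unwind over the $\epsilon$-band removes the jumps but reintroduces bleeding terms that you would still have to dominate pathwise; none of this is carried out, and it is precisely the hard core of the lemma.

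The paper sidesteps this construction: using $H(p)\le c(\lambda)+\lambda^2\|p-p_0\|^2$ and the estimate $K\epsilon^2\le\sum_{k\le K}\bigl(|S^n_{[n\tau^{n,\epsilon}_k]}-S^n_{[n\tau^{n,\epsilon}_{k-1}]}|^2+|\tau^{n,\epsilon}_k-\tau^{n,\epsilon}_{k-1}|\bigr)$ on $\{\tau^{n,\epsilon}_K<1\}$, it dominates the whole remainder by $\lambda^2 Q^{n,\epsilon}$, applies convexity, and then cites Lemma 3.6 of \cite{BD} for the bound $\hat\pi_n(\lambda Q^{n,\epsilon})\le\lambda(1+36\overline{\sigma}^2)$ for $\lambda<\lambda_0$. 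The construction behind that cited lemma is the pathwise Doob $L^2$ inequality: one holds a \emph{short} position proportional to the running maximum, $-4\lambda^2(\bar S^n_k-s)$ (symmetrically for the minimum), together with the quadratic hedge of $4\lambda^2(S^n_n-s)^2$ via the position $8\lambda^2(S^n_k-s)$; all position increments are then $O(\lambda^2|\Delta S^n|)$, so the costs are $O(\lambda^3)$ uniformly in $n$ and no unwinding is ever needed. If you replace your long-following/freeze/unwind lookback by this pathwise-Doob strategy (or simply quote Lemma 3.6 of \cite{BD} as the paper does), your argument closes; as it stands, the key step is missing and the proposed substitute fails.
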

\begin{proof}
Choose $\epsilon,\lambda\in (0,1)$.
From the fact that $H$ is Lipschitz in the Skorohod metric we obtain that for sufficiently large $n$
\begin{equation}\label{4.5}
Z_n\leq H(S^{n,\epsilon})+O(\epsilon).
\end{equation}
Next, choose $n\in\mathbb N$ sufficiently large such that (\ref{4.5}) holds true.
Set,
$$Q^{n,\epsilon}:=\sup_{0 \leq t \leq 1}|S^{n,\epsilon}_t-s|^2
    +\sum_{k\in\mathbb N}\left(|S^n_{[n\tau^{n,\epsilon}_k]}-S^n_{[n\tau^{n,\epsilon}_{k-1}]}|^2+|\tau^{n,\epsilon}_k-\tau^{n,\epsilon}_{k-1}|\right).$$
From the definition of $\tau^{n,\epsilon}_{k}$, $k\in\mathbb N$ it follows that
\begin{equation}\label{estimate}
K \epsilon^2 \leq \sum_{k=1}^K
   \left(|S^n_{\tau^{n,\epsilon}_k}-S^n_{\tau^{n,\epsilon}_{k-1}}|^2+|\tau^{n,\epsilon}_k-\tau^{n,\epsilon}_{k-1}|\right) \
   \text{on the event} \  \left\{\tau^{n,\epsilon}_K<1\right\}.
   \end{equation}
   This together with (\ref{4.4})--(\ref{4.5}) gives
   $$Z_n\leq Z^{n,\epsilon,K}+\lambda^2 Q^{n,\epsilon}+O(\epsilon).$$
   Convexity of the wealth dynamics implies convexity of the super--replication price. Hence,
   \begin{equation}\label{4.6}
  \hat\pi_n(Z_n) \leq (1-\lambda)\hat\pi_n(Z^{n,\epsilon,K}/(1-\lambda))+\lambda\hat\pi_n(\lambda Q^{n,\epsilon})+O(\epsilon).
  \end{equation}
Following the arguments of Lemma 3.6 in \cite{BD} (replace $\sigma$ with $\overline{\sigma}$)
we get that there exists $\lambda_0>0$ such that for any $\lambda<\lambda_0$ we have
$\hat\pi_n(\lambda Q^{n,\epsilon})\leq \lambda(1+36\overline{\sigma}^2)$.
This together with (\ref{4.6}) completes the proof.
\end{proof}
\begin{rem}
Although Lemma 3.6 in \cite{BD} deals with binomial models, it is straightforward to check that the same arguments will work in our setup,
if we replace $\sigma$ with $\overline{\sigma}$. Clearly, insider information can not increase the super--replication price.
\end{rem}
The
main step in the proof of the upper bound is to understand how to
super-replicate the claims $Z^{n,\epsilon,K}/(1-\lambda)$
in the presence of insider information.
Notice that these claims depend on the values of their underlying at only a fixed number $K$ of sampling times.
Such claims turn out to allow for a particularly convenient duality estimate for their super--replication prices.
The idea is that
rather than looking
on all trading strategies for a cost--effective super-hedge, we
will consider a suitably constraint class.

Fix $n\in\mathbb N$ and $\epsilon>0$.
Let $[S^n]_k:=\frac{1}{n}\sum_{i=1}^k X^2_i$, $k=0,1...,n$ be the quadratic variation of $S^n$
and let $(\mathcal F^{n,\epsilon}_k)_{k\in\mathbb Z_{+}}$ be the filtration given by
$$\mathcal F^{n,\epsilon}_k:=\sigma\left\{\tau^{n,\epsilon}_1,...,\tau^{n,\epsilon}_k, S^n_{[n\tau^{n,\epsilon}_1]},...,
S^n_{[n\tau^{n,\epsilon}_k]},[S^n]_{[n\tau^{n,\epsilon}_1]},...,[S^n]_{[n\tau^{n,\epsilon}_k]}
\right\}.$$
We have the following computational result.
\begin{lem}\label{lem4.3}
Let $m\in\mathbb N$. Denote by
$\mathcal A_m$ the set of all
$(\mathcal F^{n,\epsilon}_k)_{k=0,\dots,m}$--adapted processes $\{\alpha_k\}_{k=0}^{m}$ such that
$|\alpha_k|\leq \log n$ for $k=0,\dots,m$.
Then, for any $\phi,\psi\in\mathcal A_m$ there exists a trading strategy $\gamma$ (in the sense that given in Section \ref{sec:2}) such that
\begin{eqnarray*}
&Y^{\gamma}_n\geq \sum_{k=0}^{m}\phi_k \left(S^n_{[n\tau^{n,\epsilon}_{k+1}]}-S^n_{[n\tau^{n,\epsilon}_k]}\right)\\
&+\sum_{k=0}^{m}\left(\frac{\psi_k}{2}+\frac{N}{4\lambda}\right) \left(S^n_{[n\tau^{n,\epsilon}_{k+1}]}-S^n_{[n\tau^{n,\epsilon}_k]}\right)^2 \nonumber\\
&-\sum_{k=0}^{m}\left(\frac{\psi_k}{2}+\Lambda\psi^2_k\right)\left([S^n]_{[n\tau^{n,\epsilon}_{k+1}]}-[S^n]_{[n\tau^{n,\epsilon}_{k}]}\right)-
O(\log^2 n/n^{1/6})\nonumber
\end{eqnarray*}
The above $O$ term may depend on $\epsilon$ and $m$.
\end{lem}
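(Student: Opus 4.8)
The plan is to build the strategy $\gamma$ explicitly on each stochastic interval $[\tau^{n,\epsilon}_k,\tau^{n,\epsilon}_{k+1})$ and then show that the accumulated mark-to-market value dominates the claimed expression up to the stated error term. The key observation is that a constant holding of $\phi_k$ shares over the $k$-th block produces exactly the linear term $\phi_k(S^n_{[n\tau^{n,\epsilon}_{k+1}]}-S^n_{[n\tau^{n,\epsilon}_k]})$, and incurs transaction costs only at the two endpoints where the position is adjusted; since $|\phi_k|\le\log n$ and the jumps of $S^n$ are $O(n^{-1/2})$, and the number $m$ of blocks is fixed, these endpoint costs are $O(\log^2 n/n)$, absorbed into the error. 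The quadratic terms are the delicate part: to generate a term proportional to $(S^n_{[n\tau^{n,\epsilon}_{k+1}]}-S^n_{[n\tau^{n,\epsilon}_k]})^2$ one trades dynamically \emph{inside} the block, holding (in addition to $\phi_k$) a position proportional to the running increment $S^n_{[nt]}-S^n_{[n\tau^{n,\epsilon}_k]}$ with proportionality constant $\psi_k/2+N/(4\Lambda)$. A discrete Itô/summation-by-parts identity gives
\begin{equation*}
\sum_{i}\bigl(S^n_{i+1}-S^n_{[n\tau^{n,\epsilon}_k]}\bigr)\bigl(S^n_{i+1}-S^n_i\bigr)=\tfrac12\bigl(S^n_{[n\tau^{n,\epsilon}_{k+1}]}-S^n_{[n\tau^{n,\epsilon}_k]}\bigr)^2-\tfrac12\bigl([S^n]_{[n\tau^{n,\epsilon}_{k+1}]}-[S^n]_{[n\tau^{n,\epsilon}_k]}\bigr)+O(n^{-1/2}),
\end{equation*}
so the dynamic part of the strategy produces the desired quadratic term in $S^n$ \emph{minus} a multiple of the quadratic variation increment, which is precisely the structure appearing on the right-hand side. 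The coefficient $\psi_k$ is $\mathcal F^{n,\epsilon}_k$-measurable, hence known at time $\tau^{n,\epsilon}_k$, so this is an admissible $\mathcal G$-adapted strategy; note the insider filtration $\mathcal G$ only helps and is not needed to implement it.

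Next I would account for the transaction costs of this dynamic trading. The position increments inside block $k$ are of the form $(\psi_k/2+N/(4\Lambda))(S^n_{i+1}-S^n_i)$, so the per-step quadratic cost is $\Lambda(\psi_k/2+N/(4\Lambda))^2 (S^n_{i+1}-S^n_i)^2$; summing over the block and matching against $[S^n]$ increments, this contributes (to leading order) $-\Lambda(\psi_k/2+N/(4\Lambda))^2([S^n]_{[n\tau^{n,\epsilon}_{k+1}]}-[S^n]_{[n\tau^{n,\epsilon}_k]})$. Here one has to be careful: the target has $-\Lambda\psi_k^2$, not $-\Lambda(\psi_k/2+N/(4\Lambda))^2$, so the strategy must be tuned so that the cross terms and the $N$-dependent piece combine correctly — in particular the $N/(4\Lambda)$ part of the position is what is left "for free" in the claimed bound, meaning the $-\Lambda\psi_k^2$ on the right already builds in a suboptimal (but sufficient) cost estimate, and one discards the favourable part of the algebraic identity $(\psi_k/2+N/(4\Lambda))^2\ge$ something. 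I would organize this as: holding position $\phi_k$ plus $(\psi_k/2)$ times the running increment plus $(N/(4\Lambda))$ times the running increment, compute each contribution to $Y^\gamma_n$ separately, and bound the costs from above block by block. The $O(\log^2 n/n^{1/6})$ error is the accumulation of: the $O(n^{-1/2})$ discretization errors in the Itô identity, multiplied by the number of blocks (bounded) and by $\log^2 n$ from the bounds on $|\phi_k|,|\psi_k|$; the exponent $1/6$ (rather than $1/2$) presumably comes from controlling the number of steps inside a block, which on the event $\{\tau^{n,\epsilon}_{k+1}-\tau^{n,\epsilon}_k\le\epsilon^2\}$ is $O(n\epsilon^2)$ but the $\epsilon$-dependence is allowed to be absorbed, while a crude bound on the maximal block length gives a weaker power of $n$.

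The main obstacle I expect is bookkeeping the transaction costs precisely enough to land on exactly $-(\psi_k/2+\Lambda\psi_k^2)([S^n]_{[n\tau^{n,\epsilon}_{k+1}]}-[S^n]_{[n\tau^{n,\epsilon}_k]})$ rather than a nearby expression: one must combine (a) the $-\tfrac12([S^n])$ correction coming out of the discrete Itô identity applied to the $\psi_k/2$-part of the position, and (b) the $-\Lambda\psi_k^2([S^n])$ coming from the transaction costs of that same part, while simultaneously checking that the $N/(4\Lambda)$-part of the position contributes exactly the $+\tfrac{N}{4\Lambda}(\Delta S^n)^2$ term and that \emph{its} transaction cost, together with all cross-costs between the $\phi_k$, $\psi_k/2$, and $N/(4\Lambda)$ components, is nonnegative and can simply be thrown away (making the inequality only more favourable). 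A secondary technical point is that position adjustments occur both within blocks and at block boundaries, so the telescoping of $(\gamma_i-\gamma_{i-1})^2$ across a boundary mixes two blocks; since $m$ is fixed and the boundary jumps are $O(\log n\cdot n^{-1/2})$, these cross-boundary cost terms are $O(\log^2 n/n)$ and harmless, but they must be explicitly estimated. Once the per-block identity is established, summing over $k=0,\dots,m$ and collecting the $O$-terms yields the claim.
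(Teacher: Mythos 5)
Your construction misses the central idea of this lemma: the strategy must \emph{use the insider information}, and without it the claimed inequality is false for $N\geq 1$. In the paper, on the interior of each block the investor buys at step $i$ the quantity $\psi_k\,(S^n_{i+N}-S^n_{i+N-1})+\tfrac{1}{2\Lambda}(S^n_{i+N}-S^n_i)$, i.e.\ a position driven by the \emph{future} increments $X_{i+1},\dots,X_{i+N}$, which is $\mathcal G_i$-measurable but not adapted to the price filtration; this front--running component is precisely what generates the term $\tfrac{N}{4\Lambda}\bigl(S^n_{[n\tau^{n,\epsilon}_{k+1}]}-S^n_{[n\tau^{n,\epsilon}_k]}\bigr)^2$ at the stated quadratic--variation cost, and it is the source of the insider--value term $\tfrac{N}{4\Lambda}|S^\nu_1-s|^2$ in Theorem \ref{thm2.2}. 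Your replacement — holding $\bigl(\tfrac{\psi_k}{2}+\tfrac{N}{4\Lambda}\bigr)$ (or $\psi_k+\tfrac{N}{2\Lambda}$) times the \emph{running} increment $S^n_{[nt]}-S^n_{[n\tau^{n,\epsilon}_k]}$ — is non--anticipative, and your remark that ``the insider filtration only helps and is not needed'' is exactly where the argument breaks. Carrying out your own bookkeeping: with $c=\psi_k+\tfrac{N}{2\Lambda}$ the discrete It\^o identity yields gains $\tfrac{c}{2}(\Delta S)^2-\tfrac{c}{2}\Delta[S^n]$ while the per--step costs sum to $\Lambda c^2\,\Delta[S^n]$, so per block you obtain
\begin{equation*}
\Bigl(\tfrac{\psi_k}{2}+\tfrac{N}{4\Lambda}\Bigr)(\Delta S)^2-\Bigl(\tfrac{\psi_k}{2}+\tfrac{N}{4\Lambda}+\Lambda\psi_k^2+N\psi_k+\tfrac{N^2}{4\Lambda}\Bigr)\Delta[S^n],
\end{equation*}
which falls short of the target $\bigl(\tfrac{\psi_k}{2}+\tfrac{N}{4\Lambda}\bigr)(\Delta S)^2-\bigl(\tfrac{\psi_k}{2}+\Lambda\psi_k^2\bigr)\Delta[S^n]$ by $\bigl(\tfrac{N(1+N)}{4\Lambda}+N\psi_k\bigr)\Delta[S^n]$. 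Summed over blocks (recall $[S^n]_n\geq\underline\sigma^2$) this deficit is of constant order, e.g.\ already for $\psi\equiv 0$, and cannot be absorbed into $O(\log^2 n/n^{1/6})$. Relatedly, transaction costs can never be ``thrown away'' here: you need a \emph{lower} bound on $Y^\gamma_n$, so every nonnegative cost must be retained and estimated, not discarded.

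A second, more technical gap is the treatment of block boundaries and the error rate. Within a block the position is of size $O(\log n)$, so adjusting it in a single step at a boundary costs $\Lambda\,O(\log^2 n)$ — not the $O(\log^2 n/n)$ you claim — which would destroy the bound. The paper avoids this by ramping the position up over $[a_k,b_k)$ and liquidating over $[c_k,a_{k+1})$, each of length $[n^{1/3}]$, at constant speed: the ramp costs are $O(\log^2 n\, n^{-1/3})$ and the price fluctuation over a ramp is $O(n^{1/3}n^{-1/2})=O(n^{-1/6})$, and it is this ramping (not a ``crude bound on the block length,'' as you speculate) that produces the $O(\log^2 n/n^{1/6})$ error in the statement.
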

\begin{proof}
 For $k=0,1,...,m+1$ introduce the random variables (integer valued)
$$a_k:=[n\tau^{n,\epsilon}_k], \ \  b_k:=a_k+[n^{1/3}].$$
For $k=0,1,...,m$ set $c_k:=a_{k+1}-[n^{1/3}].$
Let us define the trading strategy $\gamma$ on each of the time intervals $[a_k,a_{k+1}-1]$,
$k=0,1,...,m$. If $a_k>n-2 n^{1/3}$ we do not trade in the time interval $[a_k,a_{k+1}-1]$.
Otherwise (observe that for sufficiently large $n$ $b_k<c_{k}$), we divide the interval
$$[a_k,a_{k+1}-1]= [a_k,b_k)\cup [b_k,c_k) \cup [c_k,a_{k+1}-1)$$
into 3 disjoint intervals.
On the first interval
 $[a_k,b_k)$ we trade in a constant speed and change the number of shares from $0$ to
 $\phi_k$. On the second interval
$[b_k,c_k)$ we trade such that for any
$i\in [b_k,c_k)$ we buy
$$\psi_k (S^n_{i+N}-S^n_{i+N-1})+\frac{S^n_{i+N}-S^n_i}{2\Lambda}=\frac{1}{\sqrt n}\left(\psi_k X_{i+N}+\frac{\sum_{j=i+1}^{i+N} X_j}{2\Lambda}\right)$$
number of shares. Namely,
\begin{equation}\label{4.7}
\gamma_i-\gamma_{i-1}=\frac{1}{\sqrt n}\left(\psi_k X_{i+N}+\frac{\sum_{j=i+1}^{i+N} X_j}{2\Lambda}\right), \ \ i\in [b_k,c_k).
\end{equation}
Recall, that $N$ is the number of steps that the investor can peek into the future.
Finally, on the last interval
$[c_k,a_{k+1}-1)$
we liquidate our portfolio in a constant speed.

Let us estimate the portfolio wealth of such strategy.
Observe that $S^n$ is uniformly bounded (in $n$) on the time interval
  $[0,a_m]$, and so, from (\ref{4.7}) we conclude that
$\gamma$ is of size $O(\log n)$.
Thus, the transaction costs on the (small) time intervals $[a_k,b_k)$ and $[b_k,c_k)$ are of order $O(\log^2 n/n^{1/3})$.
Next, we notice that on the intervals
$[a_k,b_k)$ and $[c_k,a_{k+1}-1)$ the stock price fluctuation is of order $O(n^{1/3}n^{-1/2})=O(n^{-1/6})$.
Therefore, by the summation by parts formula, (also recall the non trading time interval $[n-2 n^{1/3},n]$)
\begin{eqnarray}
&Y^\gamma_n=\sum_{k=0}^{m}\sum_{i=b_k}^{c_k-1}\gamma_i(S^n_{i+1}-S^n_i)-\Lambda (\gamma_i-\gamma_{i-1})^2-O(\log^2 n/n^{1/6})\nonumber\\
&=\sum_{k=0}^{m}[J^1_k-J^2_k-J^3_k]-O(\log^2 n/n^{1/6})\label{4.7+}
\end{eqnarray}
where
\begin{eqnarray*}
&J^1_k:=\gamma_{c_k}(S^n_{c_k}-S^n_{b_k}),\\
&J^2_k:=\sum_{i=b_k+1}^{c_k}(\gamma_i-\gamma_{i-1})(S^n_i-S^n_{b_k}),\\
&J^3_k:=\sum_{i=b_k}^{c_k-1}\Lambda (\gamma_i-\gamma_{i-1})^2.
\end{eqnarray*}
From (\ref{4.7}), the fact that $S^n$ is uniformly bounded on
  $[0,a_m]$ and the stock price fluctuation on the small intervals is of order $O(n^{-1/6})$
 we get
  \begin{eqnarray}\label{4.8}
&J^1_k=\phi_k (S^n_{a_{k+1}}-S^n_{a_k})+\left(\psi_k+\frac{N}{2\lambda}\right)(S^n_{a_{k+1}}-S^n_{a_k})^2+O(\log n/n^{1/6})\nonumber\\
&=\phi_k (S^n_{a_{k+1}}-S^n_{a_k})+\left(\frac{\psi_k}{2}+\frac{N}{4\Lambda}\right)\left((S^n_{a_{k+1}}-S^n_{a_k})^2+[S^n]_{a_{k+1}}-[S^n]_{a_{k}}\right)\\
&+\frac{1}{n}\left(\psi_k+\frac{N}{2\lambda}\right)\sum_{a_k\leq i<j\leq a_{k+1}}X_i X_j+O(\log n/n^{1/6}).\nonumber
\end{eqnarray}
Next, applying (\ref{4.7}) again we obtain
\begin{eqnarray}\label{4.9}
&J^2_k=\frac{\psi_k}{n}\sum_{a_k\leq i,j\leq a_{k+1}, j-i\geq N}X_i X_j+ \nonumber\\
&\frac{1}{2\Lambda n}\sum_{p=1}^N \sum_{a_k\leq i,j\leq a_{k+1}, j-i\geq p}X_i X_j+O(\log n/n^{1/6})\nonumber\\
&=\frac{1}{n}\left(\psi_k+\frac{N}{2\lambda}\right)\sum_{a_k\leq i<j\leq a_{k+1}}X_i X_j\\
&-\frac{1}{n}\sum_{p=1}^{N-1}\left(\psi_k+\frac{N-p}{2\Lambda}\right) \sum_{a_k\leq i\leq a_{k+1}}X_i X_{i+p}+O(\log n/n^{1/6})\nonumber
\end{eqnarray}
and
\begin{eqnarray}\label{4.10}
&J^3_k=\left(\psi_k+\Lambda\psi^2_k+\frac{N}{4\Lambda}\right)\left([S^n]_{a_{k+1}}- [S^n]_{a_{k}}\right)\\
&+\frac{1}{n}\sum_{p=1}^{N-1}\left(\psi_k+\frac{N-p}{2\Lambda}\right) \sum_{a_k\leq i\leq a_{k+1}}X_i X_{i+p}+O(\log^2 n/n^{1/6}).\nonumber
\end{eqnarray}
The result follows by combining (\ref{4.7+})--(\ref{4.10}).
\end{proof}
\begin{rem}
Lemma \ref{lem4.3} provides some intuition about the term
$\frac{N}{4\Lambda}|S^{\nu}_1-s_0|^2$ which appears in Theorem \ref{thm2.2}.
Indeed, due
to the insider information, the seller can "make money" from the stock price
fluctuation. The corresponding term (in the formulation of Lemma \ref{lem4.3}) is given by
$$\frac{N}{4\lambda}\sum_{k=0}^{m}\left(S^n_{[n\tau^{n,\epsilon}_{k+1}]}-S^n_{[n\tau^{n,\epsilon}_k]}\right)^2$$
and obviously it is closely related to quadratic variation.
As we prove rigorously below, due to
the It\^o Isometry, the asymptotic behavior of these terms is given by $\frac{N}{4\Lambda}|S^{\nu}_1-s_0|^2$.

The main idea in the proof of Lemma 4.4 is to apply trading strategies which
are of the form (\ref{4.7}). We obtained this type of strategies by the "guess and verify" method. Namely, by
applying these strategies, we prove below that the upper bound for the asymptotic behaviour
of the super-replication prices is equal to the lower bound of the super--replication
prices which was established in Section \ref{sec:3}.

A natural question which we leave for future research is the extension of Theorem \ref{thm2.2}
to transaction costs beyond quadratic. Let us notice that different types of
transaction costs require different types of scaling.
\end{rem}

To get a convenient upper bound for $\hat\pi_n(Z^{n,\epsilon,K}/(1-\lambda))$
it will be useful to
consider processes on a slightly expanded time horizon, namely on
$[0,1+\lambda]$ rather than $[0,1]$.
For that purpose
we extend the function $H$ to $H_{1+\lambda}:D[0,1+\lambda] \rightarrow \mathbb{R}_{+}$ simply by letting,
for $p \in D[0,1+\lambda]$,
$$
H_{1+\lambda}(p) := H\left( [0,1] \ni t \mapsto p_{t(1+\lambda)}\right).
$$
Fix $\epsilon,\lambda\in (0,1)$ and let $K=K(\epsilon,\lambda)$ as before. Let
$\mathcal D^{\epsilon,\lambda}$ be the set of stochastic processes $D=\{D_t\}_{t=0}^{1+\lambda}$ defined on some
filtered probability space $(\Omega^D,\mathcal G^D,\{\mathcal G^D_t\}_{t=0}^{1+\lambda},\mathbb P^D)$ (we assume that the filtration is right continuous
and completed),
have the form
\begin{equation}\label{form}
    D_t = \sum_{k=1}^K \Upsilon_{k-1} \mathbb I_{t\in [\theta_{k-1},\theta_k)}+(\Upsilon_{K}+\sigma W_{t-\theta_K})\mathbb I_{t\in [\theta_K,1+\lambda]}
\end{equation}
where $\theta_k$, $k=0,1...,K$ are stopping times (with respect to $\{\mathcal G^D_t\}_{t=0}^{1+\lambda}$)
and
\begin{eqnarray}
&\Upsilon_0=s_0, \ \ |\Upsilon_{k}-\Upsilon_{k-1}| \leq 2\epsilon, \label{4.11}\\
&\theta_0=0, \  \
\frac{\lambda}{K} \leq \theta_k-\theta_{k-1} \leq \frac{\lambda}{K}+\epsilon^2 \label{4.12}\\
&\Upsilon_{k-1}=\mathbb E_{\mathbb P^D}(\Upsilon_k |\mathcal G^D_{\theta_{k-1}}).\label{4.13}
\end{eqnarray}
Moreover, $W$ is a Brownian motion independent of
$\mathcal G^D_{\theta_K}$.
As usual $\{[D]_t\}_{t=0}^{1+\lambda}$ denotes the quadratic variation of $D$ which is given by
\begin{equation*}
[D]_t:=\sum_{k=1}^K
    \sum_{i=1}^{k-1}(\Upsilon_i-\Upsilon_{i-1})^2\mathbb I_{t\in [\theta_{k-1},\theta_k)}
      +\left(\sum_{i=1}^{K}(\Upsilon_i-\Upsilon_{i-1})^2+\overline{\sigma}^2(t-\theta_K)\right)\mathbb I_{t\in [\theta_K,1+\lambda]}.
\end{equation*}
We also define the process $\{\zeta^D_t\}_{t=0}^{1+\lambda}$
\begin{equation}\label{4.zeta}
    \zeta^D_t :=
    \sum_{k=1}^K
     \frac{\mathbb E_{\mathbb P^D}\left(
     \Upsilon^2_k
      -\Upsilon^2_{k-1}
      |\mathcal G^{D}_{\theta_{k-1}}
      \right)}{\mathbb E_{\mathbb P^D}\left(\theta_{k}-\theta_{k-1}|\mathcal G^D_{\theta_{k-1}}\right)}\mathbb I_{t\in [\theta_{k-1},\theta_k)}
      +\overline{\sigma}^2\mathbb I_{t\in [\theta_K,1+\lambda]}.
\end{equation}
We remark that the filtration $\mathcal G^D$  can be larger than the usual filtration generated by $D$.
\begin{prop}\label{prop4.1}
We have the following upper bound
\begin{eqnarray*}
&\lim\sup_{n\rightarrow\infty}\hat\pi_n(Z^{n,\epsilon,K}/(1-\lambda))\leq O(\epsilon+\lambda)\\
&+\sup_{D\in \mathcal D^{\epsilon,\lambda}}\mathbb E_{\mathbb P^D}\left(\frac{H_{1+\lambda}(D)}{1-\lambda}-\frac{N}{4\Lambda}\left(\frac{1}{\lambda^2}\wedge[D]_{1+\lambda}\right)-\frac{1}{16\Lambda}\int_{0}^{1+\lambda}G(\zeta^D_t)dt \right).
\end{eqnarray*}
\end{prop}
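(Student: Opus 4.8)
The plan is to pass from the discrete super--replication problem on $\Omega_n$ to the continuous control problem over $\mathcal D^{\epsilon,\lambda}$ by a duality argument combined with a weak--convergence (tightness) argument, using Lemma \ref{lem4.3} as the source of the admissible dual strategies. First I would write down the duality for $\hat\pi_n(Z^{n,\epsilon,K}/(1-\lambda))$: since $\hat\pi_n$ is the super--replication price in a finite multinomial model, standard Lagrangian/minimax duality gives
\begin{equation*}
\hat\pi_n(Z^{n,\epsilon,K}/(1-\lambda))=\sup_{\mathbb P}\ \inf_{\gamma}\ \mathbb E_{\mathbb P}\left(\tfrac{Z^{n,\epsilon,K}}{1-\lambda}-Y^{\gamma}_n\right),
\end{equation*}
the sup being over all probability measures on $\Omega_n$. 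Now restrict $\gamma$ to the constrained class produced by Lemma \ref{lem4.3} (with $m=K-1$ and with $\phi,\psi$ ranging over $\mathcal A_{K-1}$). For such $\gamma$, the wealth $Y^{\gamma}_n$ is bounded below by the explicit expression in Lemma \ref{lem4.3}, which contains precisely the terms $\phi_k(S^n_{[n\tau_{k+1}]}-S^n_{[n\tau_k]})$, $(\tfrac{\psi_k}{2}+\tfrac{N}{4\Lambda})(\Delta S^n)^2$ and $-(\tfrac{\psi_k}{2}+\Lambda\psi_k^2)\Delta[S^n]$, up to $O(\log^2 n/n^{1/6})$. Taking the infimum over $\phi$ forces $\mathbb E_{\mathbb P}(S^n_{[n\tau_{k+1}]}-S^n_{[n\tau_k]}\mid\mathcal F^{n,\epsilon}_k)=0$ in the limit (otherwise the seller makes unbounded profit since $|\phi_k|\le\log n\to\infty$), i.e. the candidate limiting measures must be (conditionally) martingale measures for the sampled process — this is where the martingale constraint \eqref{4.13} will come from. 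Taking the infimum over $\psi$ pointwise of the quadratic $\psi\mapsto(\tfrac{\psi}{2}+\tfrac{N}{4\Lambda})\,q-(\tfrac{\psi}{2}+\Lambda\psi^2)\,v$ (where $q=\mathbb E_{\mathbb P}((\Delta S^n)^2\mid\mathcal F^{n,\epsilon}_k)$ and $v=\mathbb E_{\mathbb P}(\Delta[S^n]\mid\mathcal F^{n,\epsilon}_k)$) produces, after optimising the scalar quadratic, exactly a term of the form $-\tfrac{1}{16\Lambda}G(\cdot)$ times the time increment plus $\tfrac{N}{4\Lambda}q$; matching $G$ requires checking that the ratio $q/v$ plays the role of $\zeta^D$ and that the pointwise minimisation reproduces the piecewise--quadratic $G$, together with the truncation $\tfrac{1}{\lambda^2}\wedge[D]_{1+\lambda}$ coming from the indicator $\mathbb I_{\{\tau^{n,\epsilon}_K=1\}}$ in $Z^{n,\epsilon,K}$ (on $\{\tau_K=1\}$ the accumulated squared increments are $\le K\epsilon^2\approx 1/\lambda^2$).

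Next I would establish tightness. Fix for each $n$ a near--optimal measure $\mathbb P_n$ on $\Omega_n$ in the above restricted dual. The sampled data $\big(\tau^{n,\epsilon}_1,\dots,\tau^{n,\epsilon}_K,\ S^n_{[n\tau^{n,\epsilon}_1]},\dots,\ [S^n]_{[n\tau^{n,\epsilon}_1]},\dots\big)$ live in a fixed finite--dimensional space, and \eqref{estimate} plus the a priori bound on $S^n$ give uniform moment control, so the laws of the piecewise--constant interpolations $S^{n,\epsilon}$ and $[S^n]^{n,\epsilon}$ are tight in $D[0,1]$ (and one enlarges the horizon to $[0,1+\lambda]$ by appending an independent Brownian piece with volatility $\overline{\sigma}$, as in the definition of $\mathcal D^{\epsilon,\lambda}$, to accommodate the time change $t\mapsto t(1+\lambda)$ and the last interval $[\theta_K,1+\lambda]$). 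Along a convergent subsequence, with the Skorohod representation, the limit $D$ satisfies the structural constraints \eqref{4.11}–\eqref{4.13}: the step sizes $|\Upsilon_k-\Upsilon_{k-1}|\le 2\epsilon$ and the time increments $\tfrac{\lambda}{K}\le\theta_k-\theta_{k-1}\le\tfrac{\lambda}{K}+\epsilon^2$ follow from the definition of the stopping times $\tau^{n,\epsilon}_k$, and the martingale property \eqref{4.13} follows from the $\inf_\phi$ step above. The functional $\mathbb E(H_{1+\lambda}(D)/(1-\lambda))$ passes to the limit by continuity of $H$ in the Skorohod metric and uniform integrability (from the linear growth \eqref{campi1}, $\mu=1$, and the moment bound on $\max_k|S^n_k|$ as used in Section \ref{sec:3}); the quadratic--variation term converges using the joint convergence of $(S^{n,\epsilon},[S^n]^{n,\epsilon})$; and the $G$--integral term, being a Riemann sum of $G(q/v)$ over the $K$ sampling intervals, converges to $\tfrac{1}{16\Lambda}\int_0^{1+\lambda}G(\zeta^D_t)\,dt$ by continuity of $G$ and the definition \eqref{4.zeta} of $\zeta^D$. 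Assembling these convergences bounds $\limsup_n\hat\pi_n(Z^{n,\epsilon,K}/(1-\lambda))$ by $O(\epsilon+\lambda)+\sup_{D\in\mathcal D^{\epsilon,\lambda}}\mathbb E_{\mathbb P^D}(\cdots)$, which is the claim.

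I expect the main obstacle to be the duality/exchange--of--limits step: one must justify that restricting the seller to the constrained strategies of Lemma \ref{lem4.3} does not lose value asymptotically, i.e. that the restricted $\sup_{\mathbb P}\inf_{\phi,\psi}$ still upper--bounds the true $\hat\pi_n$. The clean way is to note that Lemma \ref{lem4.3} exhibits, for \emph{each} choice of adapted $\phi,\psi\in\mathcal A_{K-1}$, an admissible trading strategy $\gamma$, hence $\hat\pi_n(Z^{n,\epsilon,K}/(1-\lambda))$ is at most the super--replication cost using that particular $\gamma$; taking the infimum over $\phi,\psi$ of the resulting (random) shortfall and then its worst case over $\mathbb P$ gives an honest upper bound — no minimax theorem is actually needed, only the explicit strategy construction. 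The remaining subtlety is handling the error terms uniformly: the $O(\log^2 n/n^{1/6})$ in Lemma \ref{lem4.3} and the $O(\epsilon)$ in \eqref{4.5} must be shown to be negligible simultaneously with the constraint $|\phi_k|,|\psi_k|\le\log n$ being non--binding in the limit (so that the limiting measures are genuine martingale measures and the scalar optimisation over $\psi$ is unconstrained), which is why the growth bound $|\phi_k|\le\log n\to\infty$ rather than a fixed bound is essential. Everything else — tightness, Skorohod coupling, continuity of $H$ and $G$, uniform integrability — is routine given the tools already assembled in Section \ref{sec:3}.
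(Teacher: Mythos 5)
Your outline assembles many of the right ingredients (the Lemma~\ref{lem4.3} strategies, the $\log n$ bound on $\phi$ forcing approximate martingality, the scalar optimization over $\psi$ producing $G$), but the core duality step is handled incorrectly. After restricting to the strategies of Lemma~\ref{lem4.3} you only obtain $\hat\pi_n(Z^{n,\epsilon,K}/(1-\lambda))\leq \epsilon+\inf_{\psi\in\mathcal A}\inf\{y:\exists\phi\in\mathcal A,\ y+\sum_k\phi_k\Delta S\geq Z^{\psi}\}$, i.e.\ an infimum over $(\phi,\psi)$ of a worst--case (equivalently $\sup_{\mathbb P}$) quantity. Your program, however, requires a \emph{single} measure $\mathbb P_n$ under which the penalty term yields the approximate--martingale estimate (\ref{4.16}) and under which the pointwise quadratic optimization over $\psi_k$ is performed; for that you must pass from $\inf_\psi\sup_{\mathbb P}$ to $\sup_{\mathbb P}\inf_\psi$. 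Your ``clean way'' claiming that no minimax theorem is needed performs exactly this exchange without justification: taking the infimum over $(\phi,\psi)$ of the random shortfall and then its worst case over $\mathbb P$ controls $\sup_{\mathbb P}\inf_{\phi,\psi}$, which by weak duality lies \emph{below} $\inf_{\phi,\psi}\sup_{\mathbb P}$ and therefore does not dominate $\hat\pi_n$ (super--replication needs one strategy for all scenarios, so the pathwise $\inf$ over strategies is not admissible). The paper needs two genuine duality ingredients here: the F\"ollmer--Kramkov duality for the linear problem with the convex constraint $|\phi_k|\leq\log n$ (this is the source of the penalty $\ln n\sum_k|\mathbb E_{\mathbb P}(\Delta S\,|\,\mathcal F^{n,\epsilon}_k)|$, which combined with the lower bound from Corollary~\ref{cor3.1}, (\ref{4.new}) and Lemma~\ref{lem4.2} gives (\ref{4.16})), and then the minimax theorem on the compact convex sets $\hat{\mathcal P}_n\times\mathcal A$ to swap $\inf_\psi$ and $\sup_{\mathbb P}$. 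Your opening identity $\hat\pi_n=\sup_{\mathbb P}\inf_\gamma\mathbb E_{\mathbb P}(\cdot)$ over all unconstrained strategies is likewise not ``standard Lagrangian duality''; the paper records it only as a remark with proof omitted and never uses it.

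The second divergence is the construction of $D$. You propose tightness of the laws of $S^{n,\epsilon}$ under $\mathbb P_n$ and a Skorohod-representation limit along $n$, whereas the paper builds, for each fixed large $n$, an element of $\mathcal D^{\epsilon,\lambda}$ directly on (an extension of) $(\Omega_n,\mathbb P_n)$: $\theta_k:=\tau^{n,\epsilon}_k+k\lambda/K$ and $\Upsilon_k:=S^n_{[n\tau^{n,\epsilon}_k]}$ \emph{minus the accumulated conditional drifts}, with an independent Brownian tail after $\theta_K$. This matters in two places you leave open: the martingale constraint (\ref{4.13}) must hold exactly for $D$ to lie in $\mathcal D^{\epsilon,\lambda}$, and it is obtained by this explicit drift compensation, not ``in the limit''; and $\zeta^D$ in (\ref{4.zeta}) is defined through conditional expectations, which are not continuous under weak convergence, so the convergence of your $G$--Riemann sums does not follow from convergence of $(S^{n,\epsilon},[S^n])$ alone. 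The per--$n$ construction is what allows the explicit comparisons (\ref{4.17}), (\ref{4.177}) and (\ref{4.18}) (using $G(z)=\min_{y\in[\underline{\sigma}^2,\overline{\sigma}^2]}(z-y)^2/y$ and the clamped $\Phi_k$) with $O(\epsilon+\lambda)$ errors at fixed $n$, deferring all weak--convergence arguments to the subsequent lemma; as written, your limit passage for the $\zeta^D$ term and for (\ref{4.13}) is a genuine gap.
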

\begin{proof}
Choose $n\in\mathbb N$. Recall the set $\Omega_n$ defined after Lemma \ref{lem4.1}, the filtration
$(\mathcal F^{n,\epsilon}_k)_{k\in\mathbb Z_{+}}$ introduced before Lemma \ref{lem4.3} and the set $\mathcal A_m$ introduced
in Lemma \ref{lem4.3}. We take $m=K-1$ and denote $\mathcal A:=\mathcal A_{K-1}$.

Denote by $\hat{\mathcal P}_n$ the set of all
probability measures on $(\Omega_n,\sigma\{X_1,...,X_n\})$.
Introduce the function
$\Gamma:\hat {\mathcal P}_n\times \mathcal A\rightarrow\mathbb R$
by
$$\Gamma(\mathbb P,\psi):=\mathbb E_{\mathbb P}\left(Z^{\psi}-\ln n
\sum_{k=0}^{K-1}\left|\mathbb E_{\mathbb P}\left(S^n_{[n\tau^{n,\epsilon}_{k+1}]}-S^n_{[n\tau^{n,\epsilon}_k]}\big|\mathcal F^{n,\epsilon}_k\right)\right|\right)$$
where
\begin{eqnarray*}
&Z^{\psi}:=\frac{Z^{n,\epsilon,K}}{1-\lambda}-\sum_{k=0}^{K-1}\left(\frac{\psi_k}{2}+\frac{N}{4\Lambda}\right) \left(S^n_{[n\tau^{n,\epsilon}_{k+1}]}-S^n_{[n\tau^{n,\epsilon}_k]}\right)^2\\
&+\sum_{k=0}^{K-1}\left(\frac{\psi_k}{2}+\Lambda\psi^2_k\right)\left([S^n]_{[n\tau^{n,\epsilon}_{k+1}]}-[S^n]_{[n\tau^{n,\epsilon}_{k}]}\right).
\end{eqnarray*}
From Lemma \ref{lem4.3} (for $m=K-1$) we obtain that for sufficiently large $n$
\begin{eqnarray}
&\hat\pi_n(Z^{n,\epsilon,K}/(1-\lambda))\nonumber\\
&<
\epsilon+\inf_{\psi\in\mathcal A}\inf\left\{y:\exists \phi\in\mathcal A: \  y+\sum_{k=0}^{K-1}\phi_k \left(S^n_{[n\tau^{n,\epsilon}_{k+1}]}-S^n_{[n\tau^{n,\epsilon}_k]}\right)\geq Z^{\psi}\right\}\nonumber\\
&=\epsilon+\inf_{\psi\in\mathcal A}\sup_{\mathbb P\in \hat{\mathcal P}_n}\Gamma(\mathbb P,\psi)\label{4.14}
\end{eqnarray}
where the last equality follows from
 classical linear super--replication duality with convexly constrained strategy sets (cf.
\cite{FK}, Theorem 4.1 in connection with Example 2.3). Let us notice that since the
probability space $(\Omega_n,\sigma\{X_1,...,X_n\})$ is finite,
then we can define a probability measure
which gives to every event a positive weight. Hence, pathwise super--replication
is equivalent to almost surely super--replication, and so, we can apply the result from \cite{FK}.

Next, we argue that we can switch the above $\inf$ and $\sup$.
It is readily checked
that for any $\mathbb P\in\hat {\mathcal P}_n$ the map
 $\psi \rightarrow \Gamma(\psi,\mathbb P)$ is convex and that
the map $\mathbb P \rightarrow \Gamma(\psi,\mathbb P)$ is concave for any $\psi$
 fixed. Observing that the sets $\hat {\mathcal P}_n,\mathcal A$ can easily be
 identified with convex and compact subsets in Euclidean space, we can
 thus invoke the Minimax Theorem (e.g. Theorem~45.8 in
 \cite{S}) to obtain
$$\inf_{\psi\in\mathcal A}\sup_{\mathbb P\in \hat{\mathcal P}_n}\Gamma(\mathbb P,\psi)=\sup_{\mathbb P\in \hat{\mathcal P}_n}\inf_{\psi\in\mathcal A}\Gamma(\mathbb P,\psi).$$
From (\ref{4.14}) we conclude that (for sufficiently large $n$)
there exists a probability measure $\mathbb P_n\in \hat{\mathcal P}_n$
for which
\begin{equation}\label{4.15}
\hat\pi_n(Z^{n,\epsilon,K}/(1-\lambda))<\epsilon+\inf_{\psi\in\mathcal A}\Gamma(\mathbb P_n,\psi).
\end{equation}

By combining Corollary \ref{cor3.1}, (\ref{4.new}) and Lemma \ref{lem4.2}
we get
$\lim\inf_{n\rightarrow\infty} \hat\pi_n(Z^{n,\epsilon,K})>-\infty.$
On the other hand, it is straight forward to see that
$\sup_{n\in\mathbb N}||Z^{n,\epsilon,K}||_{\infty}<\infty$.
These observations together with (\ref{4.15}) for $\psi\equiv 0$ give
that for sufficiently large $n$
\begin{equation}\label{4.16}
\mathbb E_{\mathbb P_n}\left(\sum_{k=0}^{K-1}\left|\mathbb E_{\mathbb P_n}\left(S^n_{[n\tau^{n,\epsilon}_{k+1}]}-S^n_{[n\tau^{n,\epsilon}_k]}\big|\mathcal F^{n,\epsilon}_k\right)\right|\right)\leq \frac{1}{\sqrt{\ln n}}.
\end{equation}
Next, for $k=0,1,...,K$ set
\begin{eqnarray*}
&\theta_k:=\tau^{n,\epsilon}_k+\frac{\lambda}{K} k, \\
&\Upsilon_k:=S^n_{[n\tau^{n,\epsilon}_{k}]}-\sum_{i=0}^{k-1}
\mathbb E_{\mathbb P_n}\left(S^n_{[n\tau^{n,\epsilon}_{i+1}]}-S^n_{[n\tau^{n,\epsilon}_i]}\big|\mathcal F^{n,\epsilon}_i\right)
\end{eqnarray*}
and introduce the stochastic processes
\begin{eqnarray*}
&\mathcal N_t=\sum_{k=1}^K\mathbb I_{t\geq \theta_k}, \ \ 0\leq t\leq 1+\lambda\\
&U_t=\sum_{k=1}^K [S^n]_{[n\tau^{n,\epsilon}_{k-1}]}\mathbb I_{t\in [\theta_{k-1},\theta_k)}+[S^n]_{[n\tau^{n,\epsilon}_{K}]}\mathbb{I}_{t\geq\theta_K}, \ \ 0\leq t\leq 1+\lambda.
\end{eqnarray*}
Let $\{D_t\}_{t=0}^{1+\lambda}$ be given by (\ref{form}) and let
$\mathcal G^D_t$ be the filtration generated by
 the processes $\{D_t\}_{t=0}^{1+\lambda},\{\mathcal N_t\}_{t=0}^{1+\lambda},\{U_t\}_{t=0}^{1+\lambda}$ and
 $\{W_{t\vee\theta_K-\theta_K}\}_{t=0}^{1+\lambda}$
 where $W$ is a standard Brownian motion independent of $S^n$.

Let us verify that (\ref{4.11})--(\ref{4.13}) hold true. Indeed,
from the definitions we have $\tau^{n,\epsilon}_{k+1}-\tau^{n,\epsilon}_k\leq \epsilon^2$
and (for sufficiently large $n$) $|S^n_{[n\tau^{n,\epsilon}_{i+1}]}-S^n_{[n\tau^{n,\epsilon}_i]}|\leq 2\epsilon$,
this gives (\ref{4.11}) and (\ref{4.12}) respectively. The equality (\ref{4.13}) follows from the simple
observation that $\mathcal G^D_{\theta_k} =\mathcal F^{n,\epsilon}_k$, $k=0,1...,K$.

We wish to apply (\ref{4.15}) and estimate
$\inf_{\psi\in\mathcal A}\Gamma(\mathbb P_n,\psi)$.
Recall the process $S^{n,\epsilon}$ given by (\ref{4.recall}). It is easy to check that
  \begin{equation}\label{4.16+}
   \max_{k=0,\dots,K}|S^{n,\epsilon}_{\tau^{n,\epsilon}_{k}}-\Upsilon_k|
  \leq
    \sum_{k=0}^{K-1}\left|\mathbb E_{\mathbb P_n}\left(S^n_{[n\tau^{n,\epsilon}_{k+1}]}-S^n_{[n\tau^{n,\epsilon}_k]}\big|\mathcal F^{n,\epsilon}_k\right)\right|
\end{equation}
  and
$\max_{k=0,\dots,K}\left|\theta_k-\tau^{n,\epsilon}_k\right| \leq \lambda.$
  Therefore, on the event $\{\tau^{n,\epsilon}_K=1\}$ we can estimate the Skorohod distance (on the space $D[0,1]$)
  $$
   d(S^{n,\epsilon},D_{(1+\lambda)\cdot .}) \leq \lambda + \sum_{k=0}^{K-1}\left|\mathbb E_{\mathbb P_n}\left(S^n_{[n\tau^{n,\epsilon}_{k+1}]}-S^n_{[n\tau^{n,\epsilon}_k]}\big|\mathcal F^{n,\epsilon}_k\right)\right|.
$$
 This together with (\ref{4.16}) and the fact that $H$ is nonnegative and Lipschitz continuous in the Skorohod metric yields
 that for sufficiently large $n$
 \begin{equation}\label{4.17}
 \mathbb E_{\mathbb P_n}\left(\frac{Z^{n,\epsilon,K}}{1-\lambda}\right) <
   \mathbb E_{\mathbb P_n}\left(\frac{H_{1+\lambda}(D)}{1-\lambda}\right)+O(\lambda).
    \end{equation}
    Next, we treat the term
    $\sum_{k=0}^{K-1}\left(S^n_{[n\tau^{n,\epsilon}_{k+1}]}-S^n_{[n\tau^{n,\epsilon}_k]}\right)^2$.
    From (\ref{estimate}), on the event $\{\tau^{n,\epsilon}_K<1\}$ we have (recall that $c(\lambda)>2$)
    $$\sum_{k=0}^{K-1}\left(S^n_{[n\tau^{n,\epsilon}_{k+1}]}-S^n_{[n\tau^{n,\epsilon}_k]}\right)^2\geq K\epsilon^2-1\geq \frac{1}{\lambda^2}.$$
    On the other hand, on the event $\{\tau^{n,\epsilon}_K=1\}$ we have $\theta_K=1+\lambda$ and so,
    on this event
    $[D]_{1+\lambda}=\sum_{i=1}^{K}(\Upsilon_i-\Upsilon_{i-1})^2.$
    From (\ref{4.16})--(\ref{4.16+}) we conclude that for sufficiently large $n$
    \begin{equation}\label{4.177}
    \mathbb E\left(\sum_{k=0}^{K-1}\left(S^n_{[n\tau^{n,\epsilon}_{k+1}]}-S^n_{[n\tau^{n,\epsilon}_k]}\right)^2\right)>
    \mathbb E_{\mathbb P^D}\left(\frac{1}{\lambda^2}\wedge [D]_{1+\lambda}\right)-\epsilon.
   \end{equation}
    We arrive to the final step in the estimation
$\inf_{\psi\in\mathcal A}\Gamma(\mathbb P_n,\psi)$.
Fix $k=0,1,...,K-1$ and set
$$\Phi_k:=\max\left(\underline{\sigma}^2,\frac{\mathbb E_{\mathbb P_n}\left([S^n]_{[n\tau^{n,\epsilon}_{k+1}]}-[S^n]_{[n\tau^{n,\epsilon}_{k}]}\big|\mathcal F^{n,\epsilon}_k\right)}{\mathbb E_{\mathbb P_n}\left(\frac{\epsilon}{K}+
\tau^{n,\epsilon}_{k+1}-\tau^{n,\epsilon}_{k}\big|\mathcal F^{n,\epsilon}_k\right)}\right).$$
 From the inequality $|X_i|\geq\underline{\sigma}$, $i\in\mathbb N$, it follows that for sufficiently large $n$ we have
$$0\leq
\Phi_k\mathbb E_{\mathbb P_n}\left(\frac{\epsilon}{K}+
\tau^{n,\epsilon}_{k+1}-\tau^{n,\epsilon}_{k}\big|\mathcal F^{n,\epsilon}_k\right)-
\mathbb E_{\mathbb P_n}\left([S^n]_{[n\tau^{n,\epsilon}_{k+1}]}-[S^n]_{[n\tau^{n,\epsilon}_{k}]}\big|\mathcal F^{n,\epsilon}_k\right)\leq
O\left(\frac{\epsilon}{K}\right).$$
Hence,
\begin{eqnarray*}
&\max_{\psi\in\mathcal A}\mathbb E_{\mathbb P_n}
\left(\frac{\psi_k}{2}\left(S^n_{[n\tau^{n,\epsilon}_{k+1}]}-S^n_{[n\tau^{n,\epsilon}_k]}\right)^2
-\left(\frac{\psi_k}{2}+\Lambda\psi^2_k\right)\left([S^n]_{[n\tau^{n,\epsilon}_{k+1}]}-[S^n]_{[n\tau^{n,\epsilon}_{k}]}\right)\right)\\
&\geq O\left(\frac{\epsilon}{K}\right)\min_{\beta\in\mathbb R}\left(\frac{\beta}{2}+\Lambda\beta^2\right)\\
&+\max_{\psi\in\mathcal A}\mathbb E_{\mathbb P_n}\Bigg(\frac{\psi_k}{2}\mathbb E_{\mathbb P_n}\left(\left(S^n_{[n\tau^{n,\epsilon}_{k+1}]}-S^n_{[n\tau^{n,\epsilon}_k]}\right)^2\big|\mathcal F^{n,\epsilon}_k\right)\\
&-\Phi_k \mathbb E_{\mathbb P_n}\left(\frac{\epsilon}{K}+
\tau^{n,\epsilon}_{k+1}-\tau^{n,\epsilon}_{k}\big|\mathcal F^{n,\epsilon}_k\right)\left(\frac{\psi_k}{2}+\Lambda\psi^2_k\right)\Bigg).
\end{eqnarray*}
The above expression is a quadratic pattern in $\psi_k$ which take maximum for
$$\psi^{*}_k=
\frac{\mathbb E_{\mathbb P_n}\left(\left(S^n_{[n\tau^{n,\epsilon}_{k+1}]}-S^n_{[n\tau^{n,\epsilon}_k]}\right)^2\big|\mathcal F^{n,\epsilon}_k\right)}{4\Lambda\Phi_k\mathbb E_{\mathbb P_n}\left(\frac{\epsilon}{K}+
\tau^{n,\epsilon}_{k+1}-\tau^{n,\epsilon}_{k}\big|\mathcal F^{n,\epsilon}_k\right)}-\frac{1}{4\Lambda}.$$
Clearly, for sufficiently large $n$, $|\psi^{*}_k|\leq\ln n$.

Recall the definition of $\zeta^D$ given in (\ref{4.zeta}). We obtain,
\begin{eqnarray}
&\max_{\psi\in\mathcal A}\sum_{k=0}^{K-1}\mathbb E_{\mathbb P_n}
\bigg(\frac{\psi_k}{2}\left(S^n_{[n\tau^{n,\epsilon}_{k+1}]}-S^n_{[n\tau^{n,\epsilon}_k]}\right)^2\nonumber\\
&-\left(\frac{\psi_k}{2}+\Lambda\psi^2_k\right)\left([S^n]_{[n\tau^{n,\epsilon}_{k+1}]}-[S^n]_{[n\tau^{n,\epsilon}_{k}]}\right)\bigg)\nonumber\\
&\geq\sum_{k=0}^{K-1}\mathbb E_{\mathbb P_n}\left(\frac{\left(
\mathbb E_{\mathbb P_n}\left(\left(S^n_{[n\tau^{n,\epsilon}_{k+1}]}-S^n_{[n\tau^{n,\epsilon}_k]}\right)^2\big|\mathcal F^{n,\epsilon}_k\right)
-\Phi_k \mathbb E_{\mathbb P_n}\left(\frac{\epsilon}{K}+
\tau^{n,\epsilon}_{k+1}-\tau^{n,\epsilon}_{k}\big|\mathcal F^{n,\epsilon}_k\right)\right)^2}{16\Lambda\Phi_k \mathbb E_{\mathbb P_n}\left(\frac{\epsilon}{K}+
\tau^{n,\epsilon}_{k+1}-\tau^{n,\epsilon}_{k}\big|\mathcal F^{n,\epsilon}_k\right)}\right)\nonumber\\
&-O(\epsilon)\nonumber\\
&\geq \frac{1}{16\Lambda}\sum_{k=0}^{K-1}\mathbb E_{\mathbb P_n}\left(G\left(\frac{\mathbb E_{\mathbb P_n}\left(\left(S^n_{[n\tau^{n,\epsilon}_{k+1}]}-S^n_{[n\tau^{n,\epsilon}_k]}\right)^2\big|\mathcal F^{n,\epsilon}_k\right)}{\mathbb E_{\mathbb P_n}\left(\frac{\epsilon}{K}+
\tau^{n,\epsilon}_{k+1}-\tau^{n,\epsilon}_{k}\big|\mathcal F^{n,\epsilon}_k\right)}\right)\left(\frac{\epsilon}{K}+
\tau^{n,\epsilon}_{k+1}-\tau^{n,\epsilon}_{k}\right)\right)\nonumber\\
&-O(\epsilon)\nonumber\\
&\geq \frac{1}{16\Lambda}\sum_{k=0}^{K-1}\mathbb E_{\mathbb P_n}\left(G\left(\zeta^D_{\theta_k}\right)\left(\theta_{k+1}-\theta_k\right)\right)-O(\epsilon)\nonumber\\
&=
\frac{1}{16\Lambda}\mathbb E_{\mathbb P_n}\left(\int_{0}^{1+\lambda}G(\zeta^D_t) dt\right)-O(\epsilon).\label{4.18}
\end{eqnarray}
The first inequality follows from the above analysis. The second inequality follows from
the fact that for sufficiently large $n$,
$\Phi_k\in [\underline{\sigma}^2,\overline{\sigma}^2]$, $k=0,1,...,K-1$ and
the relation $G(z)=\min_{y\in [\underline{\sigma}^2,\overline{\sigma}^2]}\frac{(z-y)^2}{y}$, $z\in\mathbb R$. We also used the law of total expectation.
The third inequality is due to (\ref{4.16})--(\ref{4.16+})
and the fact that $\Upsilon_k, S^n_{[n\tau^{n,\epsilon}_k]}$, $k=0,1...,K$
are uniformly bounded (in $n$). The last equality is trivial.

The proof follows from combining (\ref{4.15}), and (\ref{4.17})--(\ref{4.18}).
\end{proof}
In view of Lemma \ref{lem4.2} and Proposition \ref{prop4.1}, in order to prove (\ref{4.3}) (and complete the proof of Proposition \ref{prop.campi2})
it remains to establish the following limit theorem.
\begin{lem}
\begin{eqnarray*}
&{\underline{\lim}}_{\lambda\rightarrow 0}{\underline{\lim}}_{\epsilon\rightarrow 0}\sup_{D\in \mathcal D^{\epsilon,\lambda}}\mathbb E_{\mathbb P^D}\left(\frac{H_{1+\lambda}(D)}{1-\lambda}-\frac{N}{4\Lambda}\left(\frac{1}{\lambda^2}\wedge[D]_{1+\lambda}\right)-\frac{1}{16\Lambda}\int_{0}^{1+\lambda}G(\zeta^D_t)dt \right)\\
&\leq\sup_{\nu\in\Gamma}\mathbb E_{\mathbb P^W}\left(H(S^{\nu})-\frac{N}{4\Lambda}|S^{\nu}_1-s|^2-\frac{1}{16\Lambda}\int_{0}^1 G(\nu^2_t) dt\right).
\end{eqnarray*}
\end{lem}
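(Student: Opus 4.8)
The plan is to pass to the limit $\epsilon\to0$ first (for fixed $\lambda$), then $\lambda\to0$, reducing the discretized control sets $\mathcal D^{\epsilon,\lambda}$ to admissible volatility controls on the Wiener space. The first step is a compactness/tightness argument for the laws of the processes $(D,[D],\zeta^D)$ as $\epsilon\to0$. From \eqref{4.11}--\eqref{4.12} the jumps of $D$ are $O(\epsilon)$ and the mesh of the partition $\{\theta_k\}$ is between $\lambda/K$ and $\lambda/K+\epsilon^2$, with $K=K(\epsilon,\lambda)\to\infty$ as $\epsilon\to0$; moreover \eqref{4.13} makes $(\Upsilon_k)$ a martingale, so $D$ is (up to the final Brownian piece) a martingale with uniformly small increments. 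I would invoke Aldous' criterion, or the martingale arguments already used in Section~\ref{sec:3}, to get tightness in $D[0,1+\lambda]$ of the laws of $D$, together with $[D]$; one also controls $\zeta^D$ using $G(\zeta)=\min_{y\in[\underline\sigma^2,\overline\sigma^2]}(\zeta-y)^2/y$ and the uniform bounds on increments, so that $\int_0^{1+\lambda}G(\zeta^D_t)\,dt$ is the relevant tight quantity (its value forces $\zeta^D$ to stay essentially in a bounded band, otherwise the penalty blows up). Choosing a near-optimal sequence $D^\epsilon\in\mathcal D^{\epsilon,\lambda}$ and passing to a weakly convergent subsequence, the limit is a continuous martingale $\bar D$ on $[0,1+\lambda]$ whose quadratic variation has a density $\bar\nu^2$ with $\underline\sigma\le\bar\nu\le\overline\sigma$ (this band comes from the structure of $\zeta^D$ together with the definition of $\Phi_k$ in the proof of Proposition~\ref{prop4.1}), so $\bar D$ is of the form $S^{\bar\nu}$ after a deterministic time change rescaling $[0,1+\lambda]$ to $[0,1]$.

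The second step is to identify the three terms in the limit. For the payoff term one uses Skorohod continuity and the growth estimate \eqref{4.4} (which gives uniform integrability of $H_{1+\lambda}(D^\epsilon)/(1-\lambda)$) to get convergence to $\mathbb E_{\mathbb P^W}(H(S^{\bar\nu}))/(1-\lambda)$; here one also uses that $H_{1+\lambda}$ is just $H$ composed with the time rescaling, so after the change of variables the limit lives on $[0,1]$. For the quadratic-variation term, $[D^\epsilon]_{1+\lambda}\Rightarrow[\bar D]_{1+\lambda}=\int_0^1\bar\nu^2_t\,dt=|S^{\bar\nu}_1-s|^2$ in law, and since this is bounded by $\overline\sigma^2(1+\lambda)$ the truncation $\tfrac{1}{\lambda^2}\wedge\,\cdot$ is inactive for $\lambda$ small, so $-\tfrac{N}{4\Lambda}(\tfrac1{\lambda^2}\wedge[D]_{1+\lambda})\to-\tfrac{N}{4\Lambda}|S^{\bar\nu}_1-s|^2$. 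For the $G$-penalty term, lower semicontinuity of $\nu\mapsto\mathbb E\int G(\nu^2)$ under weak convergence of $(D^\epsilon,[D^\epsilon])$ gives $\liminf\mathbb E\int_0^{1+\lambda}G(\zeta^{D^\epsilon}_t)\,dt\ge\mathbb E\int_0^1 G(\bar\nu^2_t)\,dt$ (again after rescaling); since this term enters with a minus sign, lower semicontinuity is exactly what is needed for the upper-bound inequality. Putting the three together yields, for each fixed $\lambda$,
\[
{\underline{\lim}}_{\epsilon\to0}\sup_{D\in\mathcal D^{\epsilon,\lambda}}\mathbb E_{\mathbb P^D}(\cdots)\le\frac{1}{1-\lambda}\sup_{\nu\in\Gamma}\mathbb E_{\mathbb P^W}\left(H(S^{\nu})-\frac{N(1-\lambda)}{4\Lambda}|S^{\nu}_1-s|^2-\frac{1-\lambda}{16\Lambda}\int_0^1 G(\nu^2_t)\,dt\right)+O(\lambda),
\]
where a harmless rescaling of time on $[0,1+\lambda]$ to $[0,1]$ produces the $(1-\lambda)$ factors and the $O(\lambda)$ error, using boundedness of $\nu$ and the growth of $H$.

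The final step is to let $\lambda\to0$: the right-hand side above converges to $\sup_{\nu\in\Gamma}\mathbb E_{\mathbb P^W}(H(S^{\nu})-\tfrac{N}{4\Lambda}|S^{\nu}_1-s|^2-\tfrac1{16\Lambda}\int_0^1 G(\nu^2_t)\,dt)$ by dominated convergence (the integrand depends continuously on $\lambda$ and is dominated uniformly in $\lambda\in(0,1/2)$, using \eqref{campi1} and the It\^o isometry bound on $\mathbb E|S^\nu_1-s|^2$), which is the claimed bound. I expect the main obstacle to be the first step: showing that a near-optimal sequence $D^\epsilon$ stays tight and that the limit is genuinely of the form $S^{\bar\nu}$ with $\bar\nu$ bounded and progressively measurable on a Wiener space — in particular, upgrading the limiting martingale's representation to an It\^o integral against a Brownian motion (via a martingale representation / stochastic-integral-of-a-Brownian-motion argument, since the filtration $\mathcal G^D$ may be strictly larger than the one generated by $D$) and verifying that the density $\bar\nu^2$ inherits the bound $[\underline\sigma^2,\overline\sigma^2]$ from the joint behaviour of $\zeta^D$ and $[D]$. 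A secondary technical point is the lower semicontinuity of the $G$-functional along the approximating sequence, which one handles by writing $G$ as a supremum (or infimum, then using the min-representation) of affine functions of $\zeta$ and $1$ and passing to the limit term by term.
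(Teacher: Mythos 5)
Your overall architecture (send $\epsilon\to 0$ first for fixed $\lambda$, then $\lambda\to 0$, with a time rescaling from $[0,1+\lambda]$ to $[0,1]$) is the same as the paper's, but the key $\epsilon\to 0$ step contains a genuine gap. You assert that the limiting continuous martingale has quadratic-variation density confined to the band $[\underline{\sigma}^2,\overline{\sigma}^2]$, and you use this both to declare the truncation $\frac{1}{\lambda^2}\wedge[D]_{1+\lambda}$ inactive and to identify the limit as $S^{\bar\nu}$ with $\bar\nu\in\Gamma$. Nothing in the definition of $\mathcal D^{\epsilon,\lambda}$ enforces such a band: conditions \eqref{4.11}--\eqref{4.13} and \eqref{4.zeta} only constrain increments and the martingale property, and since $K=K(\epsilon,\lambda)\approx c(\lambda)/(\epsilon\lambda)^2$ one has $\zeta^D$ as large as order $\epsilon^2K/\lambda\approx c(\lambda)/\lambda^3$ (this is exactly the bound \eqref{4.44}), while $[D]_{1+\lambda}$ can be of order $K\epsilon^2\approx c(\lambda)/\lambda^2$, i.e.\ comparable to the truncation level $1/\lambda^2$. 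The band is only a \emph{soft} constraint imposed through the penalty $G$ --- that is precisely why $G$ survives in the limit functional and why the right-hand side is a supremum over all bounded nonnegative $\nu$, not over $\nu$ with values in $[\underline{\sigma},\overline{\sigma}]$. Hence your removal of the truncation fails, and your pathwise identification $[D]_{1+\lambda}\to\int_0^1\bar\nu_t^2\,dt=|S^{\bar\nu}_1-s|^2$ is also incorrect: quadratic variation equals the squared terminal increment only in expectation (It\^o isometry). The paper instead keeps the truncation throughout, first extracts from the $G$-penalty (together with the reduction that the value is at least $-\underline{\sigma}^2/(8\Lambda)$) the uniform moment bound \eqref{volbound} on $d\langle M^{\lambda}\rangle/dt$, and only then, via Cauchy--Schwarz and the It\^o isometry, converts the truncated quadratic-variation term into $|\bar M^{\lambda}_1-s|^2$ up to vanishing error.

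Because the limiting volatility is neither bounded by $\overline{\sigma}$ nor a priori of the form required by $\Gamma$, the final comparison with $\sup_{\nu\in\Gamma}$ is not settled by a martingale-representation argument; the paper invokes the randomization technique of Lemma 7.2 in \cite{DS} at exactly this point, whereas your proposal substitutes the unjustified band bound. Two further, more repairable, issues: lower semicontinuity of $\mathbb E\int_0^{1+\lambda}G(\zeta^{D^m}_t)\,dt$ cannot be obtained ``term by term'' from affine minorants, because $\zeta^{D^m}$ converges only after time integration ($\int_0^{\cdot}\zeta^{D^m}_t\,dt\to\langle M^{\lambda}\rangle$); the paper passes to convex combinations (Lemma A1.1 of \cite{DS1}, i.e.\ a Mazur/Koml\'os-type argument) converging a.s.\ and then uses Fatou with convexity of $G$, and some such device is needed in your argument too. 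Likewise, uniform integrability of $H_{1+\lambda}(D^m)$ should come from the Doob/Burkholder bounds combined with \eqref{4.44}, not from the growth estimate \eqref{4.4} alone. As written, the proof of the $\epsilon\to 0$ limit does not go through.
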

\begin{proof}
Fix $\lambda>0$. For any $m\in\mathbb N$ let $\epsilon_m:=\frac{1}{m}$ and choose
$D^m\in\mathcal D^{\epsilon_m,\lambda}$
such that
\begin{eqnarray*}
&\frac{1}{m}+\mathbb E_{\mathbb P^{D^m}}\left(\frac{H_{1+\lambda}(D^m)}{1-\lambda}-\frac{N}{4\Lambda}\left(\frac{1}{\lambda^2}\wedge[D]_{1+\lambda}\right)-\frac{1}{16\Lambda}\int_{0}^{1+\lambda}G(\zeta^{D^m}_t)dt \right)\\
&\geq\sup_{D\in \mathcal D^{\epsilon_m,\lambda}}\mathbb E_{\mathbb P^{D}}\left(\frac{H_{1+\lambda}(D)}{1-\lambda}-\frac{N}{4\Lambda}\left(\frac{1}{\lambda^2}\wedge[D]_{1+\lambda}\right)-\frac{1}{16\Lambda}\int_{0}^{1+\lambda}G(\zeta^D_t)dt \right).
\end{eqnarray*}
Following the same arguments as in Lemma 3.9 in \cite{BD} gives that there exists a subsequence (which still denoted by $m$) and a continuous martingale
$M^{\lambda}=\{M^{\lambda}_t\}_{t=0}^{1+\lambda}$ such that we have the weak convergence
$$ \left(D^{m},[D^m],\int_{0}^{\cdot}\zeta^{D^m}_t dt\right)\Rightarrow \left(M^{\lambda},\langle  M^{\lambda}\rangle,\langle  M^{\lambda}\rangle\right) \ \
\mbox{on} \ \  D^3[0,1+\lambda].$$
As usual $\langle M^{\lambda}\rangle $ denotes the quadratic variation of the continuous martingale $M^{\lambda}$.
From the Skorohod representation theorem (see \cite{D}) it follows that we can construct a probability space
(the corresponding probability measure will be denoted by $\mathbb P$)
 such that
\begin{equation}\label{4.201}
 \left(D^{m},[D^m],\int_{0}^{\cdot}\zeta^{D^m}_t dt\right)\rightarrow \left(M^{\lambda},\langle  M^{\lambda}\rangle,\langle  M^{\lambda}\rangle\right) \ \ \mathbb{P} \ \mbox{a.s.}
\end{equation}
From  (\ref{4.11})--(\ref{4.zeta}) it follows that
\begin{equation}\label{4.44}
\sup_{m\in\mathbb N}\sup_{0\leq t\leq {1+\lambda}}\zeta^{D^m}_t<\overline{\sigma}^2+\sup_{m\in\mathbb N}\frac{4 K(1/m,\lambda)}{\lambda m^2}<\infty.
\end{equation}
Thus,
by Lemma A1.1 in \cite{DS1}
 we construct a sequence $\alpha^{m}\in conv(\zeta^{D^m},\zeta^{D^{m+1}},...)$, $m\in\mathbb N$ such that
$\alpha^{m}$ converge $\mathbb P\otimes dt$ almost surely to a stochastic process $\alpha$.
From (\ref{4.201})--(\ref{4.44}) and the bounded convergence theorem it follows that
$$\int_{0}^t \alpha_u du=\lim_{m\rightarrow\infty}\int_{0}^t\alpha^{m}_u du=\lim_{m\rightarrow\infty}\int_{0}^t\zeta^{D^m}_u du=\langle M^{\lambda}\rangle_t, \ \
\mathbb P\otimes dt \ \mbox{a.s.}$$
We conclude
that $\alpha=\frac{d\langle M^{\lambda}\rangle}{dt}$ $\mathbb P\otimes dt$ a.s.

From the Fatou lemma and the convexity of $G$ we obtain
\begin{eqnarray}
&\mathbb E_{\mathbb P}\left(\int_{0}^{1+\lambda} G\left(\frac{d\langle M^{\lambda}\rangle}{dt}\right) dt\right)\nonumber \\
&\leq\lim\inf_{m\rightarrow\infty}\mathbb E_{\mathbb P}\left(\int_{0}^{1+\lambda} G(\alpha^m_t) dt\right)\nonumber \\
&\leq \lim\inf_{m\rightarrow\infty}\mathbb E_{\mathbb P^{D^m}}\left(\int_{0}^{1+\lambda} G(\zeta^{D^m}_t) dt\right).\label{4.202}
\end{eqnarray}
From the dominated convergence theorem and
(\ref{4.201})
\begin{equation}\label{4.203}
\lim_{m\rightarrow\infty}\mathbb E_{\mathbb P^{D^m}}\left(\frac{1}{\lambda^2}\wedge[D^m]_{1+\lambda}\right)=\mathbb E_{\mathbb P}\left(\frac{1}{\lambda^2}\wedge\langle M\rangle_{1+\lambda}\right).
\end{equation}
From
the Doob--Kolmogorov inequality, the
linear growth of $H$ and (\ref{4.44}) we conclude that
the random variables $\{H_{1+\lambda}(D^m)\}_{m\in\mathbb N}$ are uniformly integrable. This together with
(\ref{4.201}) gives
\begin{equation}\label{4.204}
\lim_{m\rightarrow\infty}\mathbb E_{\mathbb P^{D^m}}\left(H_{1+\lambda}(D^{m})\right)=\mathbb E_{\mathbb P}\left(H_{1+\lambda}(M^{\lambda})\right).
\end{equation}
By combining (\ref{4.202})--(\ref{4.204}) we arrive to
\begin{eqnarray*}
&{\underline{\lim}}_{\lambda\rightarrow 0}{\underline{\lim}}_{\epsilon\rightarrow 0}\sup_{D\in \mathcal D^{\epsilon,\lambda}}\mathbb E_{\mathbb P^D}\left(\frac{H_{1+\lambda}(D)}{1-\lambda}-\frac{N}{4\Lambda}\left(\frac{1}{\lambda^2}\wedge[D^m]_{1+\lambda}\right)-\frac{1}{16\Lambda}\int_{0}^{1+\lambda}G(\zeta^D_t)dt \right)\nonumber\\
&\leq {\underline{\lim}}_{\lambda\rightarrow 0}\mathbb E_{\mathbb P^D}\left(\frac{H_{1+\lambda}(M^{\lambda})}{1-\lambda}-\frac{N}{4\Lambda}\left(\frac{1}{\lambda^2}\wedge\langle M^{\lambda}\rangle_{1+\lambda}\right)-\frac{1}{16\Lambda}\int_{0}^{1+\lambda} G\left(\frac{d\langle M^{\lambda}\rangle}{dt}\right)dt \right).
\end{eqnarray*}
Finally, it remains to establish the inequality
\begin{eqnarray}\label{4.206}
&{\underline{\lim}}_{\lambda\rightarrow 0}\mathbb E_{\mathbb P^D}\left(\frac{H_{1+\lambda}(M^{\lambda})}{1-\lambda}-\frac{N}{4\Lambda}\left(\frac{1}{\lambda^2}\wedge\langle M^{\lambda}\rangle_{1+\lambda}\right)-\frac{1}{16\Lambda}\int_{0}^{1+\lambda} G\left(\frac{d\langle M^{\lambda}\rangle}{dt}\right)dt \right)\nonumber\\
&\leq\sup_{\nu\in\Gamma}\mathbb E_{\mathbb P^W}\left(H(S^{\nu})-\frac{N}{4\Lambda}|S^{\nu}_1-s|^2-\frac{1}{16\Lambda}\int_{0}^1 G(\nu^2_t) dt\right).
\end{eqnarray}
Clearly, the right hand side of (\ref{4.206}) is bigger than $-\frac{\underline{\sigma}^2}{16\Lambda}$
(just take $\nu\equiv 0$).
 Thus, without loss of generality we can assume that for sufficiently small $\lambda>0$ we have
\begin{equation}\label{final}
\mathbb E_{\mathbb P}\left(\frac{H_{1+\lambda}(M^{\lambda})}{1-\lambda}-\frac{N}{4\Lambda}\left(\frac{1}{\lambda^2}\wedge\langle M^{\lambda}\rangle_{1+\lambda}\right)-\frac{1}{16\Lambda}\int_{0}^{1+\lambda} G\left(\frac{d\langle M^{\lambda}\rangle}{dt}\right)dt \right)\geq -\frac{\underline{\sigma}^2}{8\Lambda}.
\end{equation}
Otherwise, (\ref{4.206}) is trivial.
From the Burkholder--David--Gundy inequality and the linear Growth of $H$ it follows that there exists a constant $C_1>0$ such that
$$\mathbb E_{\mathbb P}\left(\frac{H_{1+\lambda}(M^{\lambda})}{1-\lambda}\right)\leq C_1\left(1+\langle M^{\lambda}\rangle_{1+\lambda} \right).
$$
On the other hand it is straight forward to see that there exists a constant $C_2>0$ such that
$G(z)\geq \frac{z^2}{C_2}-1$, $\forall z\geq 0$. From (\ref{final}) we conclude that
\begin{equation}\label{volbound}
\lim\sup_{\lambda\rightarrow 0} \mathbb{E}_{\mathbb P}\left(\int_{0}^{1+\lambda}\left(\frac{d\langle  M^\lambda\rangle _t}{dt}\right)^2 dt\right)<\infty.
\end{equation}
Finally, for any $\lambda\in (0,1)$ define the continuous martingale $\bar M^{\lambda}$ on the time interval $[0,1]$ by
$\bar M^{\lambda}_t:=M^{\lambda}_{(1+\lambda)t}$, $t\in [0,1]$.
Observe that
$\bar M^{\lambda}_1= M^{\lambda}_{1+\lambda}$.
From the Cauchy--Schwarz inequality, the It\^o Isometry and (\ref{volbound}) we obtain
\begin{eqnarray*}
&\lim\inf_{\lambda\rightarrow 0}
\mathbb E_{\mathbb P}\left(\frac{H_{1+\lambda}(M^{\lambda})}{1-\lambda}-\frac{N}{4\Lambda}\left(\frac{1}{\lambda^2}\wedge\langle M^{\lambda}\rangle_{1+\lambda}\right)-\frac{1}{16\Lambda}\int_{0}^{1+\lambda} G\left(\frac{d\langle M^{\lambda}\rangle}{dt}\right)dt \right)\\
&=
\lim\inf_{\lambda\rightarrow 0}
\mathbb E_{\mathbb P}\left(H_{1+\lambda}(\bar M^{\lambda})-\frac{N}{4\Lambda}|\bar M^{\lambda}_{1}-s|^2-\frac{1}{16\Lambda}\int_{0}^{1} G\left(\frac{d\langle \bar M^{\lambda}\rangle}{dt}\right)dt \right)\\
&\leq\sup_{\nu\in\Gamma}\mathbb E_{\mathbb P^W}\left(H(S^{\nu})-\frac{N}{4\Lambda}|S^{\nu}_1-s|^2-\frac{1}{16\Lambda}\int_{0}^1 G(\nu^2_t) dt\right)
\end{eqnarray*}
where the last inequality follows by using the same arguments (based on the randomization technique) as in Lemma 7.2 in \cite{DS}.
This completes the proof.
\end{proof}
\begin{rem}\label{rem3}
We finish with claiming that the upper bound holds for payoffs
of the from given by (\ref{formm}).
Clearly, Theorem \ref{thm2.2} remains true for claims  which are bounded from below and Lipschitz
  continuous with respect to the Skorohod metric. Thus, for any $m\in\mathbb N$ we have
\begin{eqnarray*}
&\lim_{n\rightarrow\infty} \pi_n\left(\hat H_m\left(\{S^{n}_{[nt]}\}_{t=0}^1\right)\right)\\
&=\sup_{\nu\in\Gamma}\mathbb E_{\mathbb P^W}\left(\hat H_m(S^{\nu})-\frac{N}{4\Lambda}|S^{\nu}_1-s|^2-\frac{1}{16\Lambda}\int_{0}^1 G(\nu^2_t) dt\right)
\end{eqnarray*}
where $\hat H_m$ is given by
$$\hat H_m(p):=H(p)-\alpha (m\wedge |p_1-p_0|^2), \ \ \forall p\in D[0,1].$$
Since $\hat H_m\geq \hat H$ then
$\pi_n\left(\hat H_m\left(\{S^{n}_{[nt]}\}_{t=0}^1\right)\right)\geq \pi_n\left(\hat H\left(\{S^{n}_{[nt]}\}_{t=0}^1\right)\right)$ for all $n,m\in\mathbb N$.
Thus, by using same arguments as in the proof of the inequality (\ref{4.206}) above,
we obtain
\begin{eqnarray*}
&\lim_{n\rightarrow\infty} \pi_n\left(\hat H\left(\{S^{n}_{[nt]}\}_{t=0}^1\right)\right)\\
&\leq\lim_{m\rightarrow\infty} \sup_{\nu\in\Gamma}\mathbb E_{\mathbb P^W}\left(\hat H_m(S^{\nu})-\frac{N}{4\Lambda}|S^{\nu}_1-s|^2-\frac{1}{16\Lambda}\int_{0}^1 G(\nu^2_t) dt\right)\\
&=\sup_{\nu\in\Gamma}\mathbb E_{\mathbb P^W}\left(\hat H(S^{\nu})-\frac{N}{4\Lambda}|S^{\nu}_1-s|^2-\frac{1}{16\Lambda}\int_{0}^1 G(\nu^2_t) dt\right)
\end{eqnarray*}
as required.
\end{rem}

\section*{Acknowledgments}
This research was supported by the
GIF Grant 1489-304.6/2019 and the ISF grant 160/17.

\bibliographystyle{spbasic}

\end{document}